\numberwithin{equation}{section}
\let\doendproof\endproof
\renewcommand\endproof{~\hfill\qed\doendproof}
\spnewtheorem{assumption}[theorem]{Assumption}{\bfseries}{\rmfamily}
\spnewtheorem{notation}[theorem]{Notation}{\bfseries}{\rmfamily}
\spnewtheorem{observation}[theorem]{Observation}{\bfseries}{\rmfamily}
\spnewtheorem{defn}[theorem]{Definition}{\bfseries}{\rmfamily}
\spnewtheorem{expl}[theorem]{Example}{\bfseries}{\rmfamily}
\spnewtheorem{rem}[theorem]{Remark}{\bfseries}{\rmfamily}
\Crefname{notation}{Notation}{Notations}
\Crefname{rem}{Remark}{Remarks}
\Crefname{expl}{Example}{Examples}
\Crefname{defn}{Definition}{Definitions}
\Crefname{assumption}{Assumption}{Assumptions}
\newcommand{\Stone}{\mathbf{Stone}}
\newcommand{\Priest}{\mathbf{Priest}}
\newcommand{\Set}{\mathbf{Set}}
\newcommand{\Pos}{\mathbf{Pos}}
\newcommand{\Top}{\mathbf{Top}}
\newcommand{\JSL}{\mathbf{JSL}}
\newcommand{\Alg}{\mathsf{Alg}}
\newcommand{\A}{\mathscr{A}}
\newcommand{\B}{\mathscr{B}}
\newcommand{\C}{\mathscr{C}}
\newcommand{\D}{\mathscr{D}}
\newcommand{\X}{\mathscr{X}}
\newcommand{\E}{\mathcal{E}}
\newcommand{\V}{\mathcal{V}}
\newcommand{\MT}{\mathbf{T}}
\newcommand{\Pro}[1]{\operatorname{Pro-#1}}
\newcommand{\hatD}{\widehat\D}
\newcommand{\hatT}{\widehat\MT}
\newcommand{\hatt}{\hat{T}}
\newcommand{\br}[1]{{#1^+}}
\newcommand{\brv}[1]{{#1^+_\V}}
\newcommand{\hatJ}{\hat{J}}
\newcommand{\hatX}{{\hat{X}}}
\newcommand{\hateta}{\hat\eta}
\newcommand{\hatmu}{\hat\mu}
\renewcommand{\epsilon}{\varepsilon}
\newcommand{\id}{\mathit{id}}
\newcommand{\Id}{\mathsf{Id}}
\newcommand{\seq}{\subseteq}
\newcommand{\xra}{\xrightarrow}
\DeclareMathOperator{\Lim}{Lim}
\newcommand{\defeq}{\coloneqq}
\newcommand{\op}{\mathrm{op}}
\renewcommand{\o}{\cdot}
\newcommand{\takeout}[1]{\empty}
\renewcommand{\phi}{\varphi}
\newcommand{\ra}{\rightarrow}
\newcommand{\Ra}{\Rightarrow}
\newcommand{\DCat}{\mathscr{D}}
\newcommand{\epito}{\twoheadrightarrow}
\newcommand{\monoto}{\rightarrowtail}
\newcommand{\Q}{\mathcal{Q}}
\newcommand{\Pow}{\mathcal{P}}
\title{Profinite Monads, Profinite Equations, and Reiterman's Theorem}
\titlerunning{Profinite Monads, Profinite Equations, and Reiterman's Theorem}
\author{Liang-Ting Chen\inst{1}\fnmsep\thanks{Liang-Ting Chen acknowledges
    gratefully partial support from AFOSR.} \and Ji\v{r}í Ad\'amek\inst{1} \and
  Stefan
  Milius\inst{2}\fnmsep\thanks{Stefan Milius acknowledges support by the
    Deutsche Forschungsgemeinschaft (DFG) under project MI 717/5-1} \and Henning
  Urbat\inst{1}\fnmsep\thanks{Ji\v{r}í Ad\'amek and Henning Urbat acknowledge
    support by the Deutsche Forschungsgemeinschaft (DFG) under project AD
    187/2-1}}
\authorrunning{L.-T.~Chen, J.~Adámek, S.~Milius, H.~Urbat}
\institute{Institut f\"{u}r Theoretische Informatik \\
Technische Universit\"at Braunschweig, Germany
\and
Lehrstuhl f\"ur Theoretische Informatik \\
Friedrich-Alexander Universit\"at Erlangen-N\"urnberg, Germany
}
\begin{document}
\maketitle

\begin{abstract}
  Profinite equations are an indispensable tool for the algebraic 
  classification of
  formal languages. Reiterman's theorem states that they precisely specify
  pseudovarieties, i.e.  classes of finite algebras closed under finite 
  products, subalgebras and quotients. In this paper Reiterman's theorem is 
  generalised to finite
  Eilenberg-Moore algebras for a monad~$\MT$ on a variety~$\D$ of (ordered)
  algebras: a class of finite $\MT$-algebras is a pseudovariety iff it is 
  presentable by profinite
  (in-)equations. As an application, quasivarieties of finite algebras are shown
  to be presentable by profinite implications. Other examples include 
  finite ordered algebras, finite categories, finite $\infty$-monoids, etc.
\end{abstract}

\section{Introduction}

Algebraic automata theory investigates the relationship between the behaviour 
of finite machines and descriptions of these behaviours in terms of finite 
algebraic 
structures. For example, regular languages of finite words are precisely the 
languages recognised by finite monoids. And Sch\"utzenberger's 
theorem~\cite{sch65} shows that star-free regular languages 
correspond to aperiodic finite monoids, which easily leads to the 
decidability of star-freeness. A generic correspondence result of this kind is 
Eilenberg's variety theorem~\cite{Eilenberg1976}. It gives a bijective correspondence 
between \emph{varieties of languages} (classes of regular languages closed under 
boolean operations,  derivatives and homomorphic preimages) and
\emph{pseudovarieties of monoids} (classes of finite monoids closed under 
finite products, submonoids and quotients). Another, more syntactic, 
characterisation of pseudovarieties follows from Reiterman's theorem~\cite{Reiterman1982} (see also Banaschewski~\cite{Banaschewski1983}): 
they are precisely the classes of finite monoids 
specified by \emph{profinite equations}.

In the meantime Eilenberg-type correspondences have been discovered for other 
kinds of algebraic structures, including ordered monoids~\cite{Pin1995}, 
idempotent semirings~\cite{polak01}, associative algebras over a 
field~\cite{reutenauer80} and Wilke algebras~\cite{wilke91}, always with 
rather similar proofs. This has spurred recent interest in 
 generic approaches to algebraic language theory that can produce such 
correspondences as instances of a single 
result. Boja\'nczyk~\cite{Bojanczyk2015} extends the classical notion of 
language recognition by monoids (viewed as algebraic structures over the 
category of 
sets) to algebras for an arbitrary monad on many-sorted sets. He also presents 
an 
Eilenberg-type theorem at this level of generality, interpreting a result of 
Almeida \cite{almeida90} in categorical terms. Our previous work 
in~\cite{Adamek2014c,Adamek2015,Adamek2015b, Chen2015} takes an orthogonal 
approach: one 
keeps monoids but considers them in categories $\D$ of (ordered) algebras such as 
posets, semilattices and vector spaces. 
Analysing the latter work it 
becomes clear that the step from sets to more general 
categories $\D$ is necessary to obtain the right notion of language recognition by 
finite monoids; e.g.~to cover Pol\'ak's Eilenberg-type theorem for idempotent 
semirings \cite{polak01}, one needs to take the base category $\D$ of 
semilattices. On the other hand, from Boja\'nczyk's work it is clear that one 
also has 
to generalise from monoids to other algebraic structures if one wants to 
capture such 
examples as Wilke algebras.

 The present paper is the first step in a line of work that considers a common 
 roof for both approaches, working with algebras for a monad $\MT$ on an 
 arbitrary variety $\D$ of many-sorted, possibly ordered algebras.
\[
\xymatrix@=8pt{
& *+[F]{\txt<8pc>{Finite $\MT$-algebras\\ in $\D$}} \ar@{-}[dl] \ar@{-}[dr] &\\
*+[F]{\txt<8pc>{Finite $\MT$-algebras \\ in $\Set^S$ \\ (Boja\'nczyk \cite{Bojanczyk2015})}} \ar@{-}[dr] & & *+[F]{\txt<8pc>{Finite monoids\\ in $\D$  \\ (Ad\'amek et. al. \cite{Adamek2014c,Adamek2015})}} \ar@{-}[dl]\\
& *+[F]{\txt<8pc>{Finite monoids\\ in $\Set$ \\ (Classical)}} &
}
\] Our main 
contribution is a generalisation of Reiterman's 
theorem, stating that pseudovarieties of finite algebras 
are presentable by profinite equations, to the more 
general situation of algebras for a monad. Starting with a variety $\D$, we 
form the pro-completion of the full subcategory $\D_f$ of finite 
algebras, 
\[\hatD \defeq \Pro \D_f.\] For example, for $\D =$ sets, posets and monoids 
we get $\hatD$ = Stone spaces, Priestley spaces and profinite monoids. 
Next, we consider a monad $\MT$ on $\D$ and associate to it a monad~$\hatT$ 
on~$\hatD$, called the
\emph{profinite monad} of~$\MT$. For example, if $\D = \Set$ and $\MT$ is
the finite word monad (whose algebras are precisely monoids), then $\hatT$
is the monad of profinite monoids on the category of Stone spaces; that is, 
$\hatT$ associates to each finite Stone space (= finite set) $X$ the space 
$\widehat{X^*}$ of profinite words on $X$. Similarly, for the monad~$\MT$ of 
finite and infinite words on $\Set$ (whose algebras we call 
\emph{$\infty$-monoids}) the profinite monad $\hatT$ constructs the space of 
\emph{profinite $\infty$-words}.

The classical profinite equations for monoids, used for presenting
pseudovarieties of monoids, are generalised to \emph{profinite equations}~$u = v$ that are pairs
of elements of~$\hatT \hat{\Phi_X}$, where $\hat{\Phi_X}$ is the free 
profinite 
$\D$-algebra on a finite set $X$ of variables. Our main result is that profinite equations present precisely classes of finite $\MT$-algebras
closed under finite products, subalgebras, and quotients.

We will additionally study a somewhat unusual 
concept of profinite equation where in lieu of finite \emph{sets} $X$ of 
variables we 
use finite \emph{algebras} $X\in\D_f$ of variables. The classes 
of
finite $\MT$-algebras presented by such profinite equations are then
precisely those closed under finite products, subalgebras, and split
quotients. These two variants are actually instances of a general result 
(\Cref{thm:reiterman}) that is parametric in a class $\X$ of ``algebras 
of variables'' in $\D$. 

The above results hold if~$\D$ is a variety of algebras. In case that 
$\D$ is a variety of ordered algebras, we obtain the analogous two results, working
with \emph{profinite inequations}~$u\leq v$ instead of equations.  As 
instances we recover Reiterman's original
 theorem~\cite{Reiterman1982} and its version for ordered algebras due to Pin
 and Weil~\cite{PinWeil1996}. Another consequence of our theorem is the
 observation that \emph{quasivarietes of finite algebras in $\D$}, i.e.\ subclasses of $\D_f$ closed under finite products and subalgebras, are presentable by \emph{profinite implications}. Moreover, we obtain a number of new Reiterman-type results. For example, for the monad of finite and infinite words on $\Set$, our Reiterman theorem shows that a class of finite $\infty$-monoids is a pseudovariety iff it can be
presented by equations between profinite $\infty$-words.  Finally, we can also 
treat 
categories of 
$\MT$-algebras that are not varieties. E.g.~by taking for $\D$ the category of 
graphs and $\MT$ the free-category monad we essentially recover a result of 
Jones on pseudovarieties of finite categories~\cite{Jones1996}.

\section{Preliminaries} \label{sec:preliminaries}
In this section we review the necessary concepts from category theory,  
universal algebra and topology we will use throughout the paper. Recall that 
for a finitary many-sorted signature~$\Gamma$ a \emph{variety of
  $\Gamma$-algebras} is a full subcategory of $\Alg_\Gamma$, the category of
$\Gamma$-algebras, specified by equations $s = t$ between $\Gamma$-terms.  By
Birkhoff's HSP theorem varieties are precisely the classes of algebras closed
under products, subalgebras, and quotients (= homomorphic images). Similarly,
\emph{ordered $\Gamma$-algebras} are posets equipped with order-preserving
$\Gamma$-operations, and their morphisms are order-preserving 
$\Gamma$-homomorphisms. A
\emph{quotient} of an ordered algebra $B$ is represented by a surjective morphism 
$e\colon
B \epito A$, and a \emph{subalgebra} of~$B$ is represented by an 
\emph{order-reflecting}
morphism~$m\colon A
\rightarrowtail B$, i.e.\ $mx \leq
my$ iff $x \leq y$.  A \emph{variety of ordered $\Gamma$-algebras} is a full
subcategory of $\Alg_{\leq \Gamma}$, the category of ordered $\Gamma$-algebras,
specified by inequations $s \leq t$ between $\Gamma$-terms. By Bloom's HSP
theorem~\cite{Bloom1976}, varieties of ordered algebras are precisely the
classes of ordered algebras closed under products, subalgebras and quotients.

\begin{rem}
For notational simplicity we restrict our attention to single-sorted 
varieties. However, all definitions, theorems and proofs
that follow are easily adapted to a many-sorted setting. See also
\Cref{rem:many-sorted} and \Cref{ex:jones}.
\end{rem}

\begin{defn}
Let $\D$ be a variety  of algebras or ordered algebras.
\begin{enumerate}[(a)]
\item  A \emph{topological $\D$-algebra} is a topological space endowed with a
    $\D$-algebraic structure such that every operation is continuous with
    respect to the product topology. \emph{Morphisms} of topological 
    $\D$-algebras are continuous $\D$-morphisms. 
\item A topological $\D$-algebra is \emph{profinite} if it is a cofiltered 
limit of finite $\D$-algebras with discrete topology.
\end{enumerate}
\end{defn}

\begin{notation}
Throughout this paper we fix a variety $\D$ of algebras or ordered algebras, equipped with the factorisation system of quotients and subalgebras. We denote by $\hatD$ the category of profinite $\D$-algebras. We use the forgetful functors
\[ \xymatrix{  \D_f \ar@{ >->}[r]^\hatJ  &\hatD \ar[r]^V & \D} \] where $V$ forgets the topology and 
$\hatJ$ views a finite $\D$-algebra as a profinite $\D$-algebra with discrete 
topology. We will often identify $A\in \D_f$ with $\hatJ A$.
\end{notation}

\begin{expl}
  \begin{enumerate}
    \item $\widehat{\Set}$ is the category $\Stone$ of Stone spaces, i.e.\
      compact spaces such that any two distinct elements can be 
      separated by a clopen set.
     \item Let $\Pos$ be the category of posets and monotone maps, viewed as the
       variety of ordered algebras over the empty signature. Then
       $\widehat{\Pos}$ is the category $\Priest$ of Priestley
       spaces~\cite{Priestley1972}, i.e.\ ordered compact spaces such that for
       any two elements $u, v$ with $u \not\leq v$ there is a clopen upper 
       set containing $u$ but not $v$.

    \item For the variety $\mathbf{Mon}$ of monoids, the category 
    $\widehat{\mathbf{Mon}}$ consists  of all monoids in 
    $\Stone$;
      that is, a topological monoid is profinite iff it carries a Stone
      topology. Analogous descriptions of $\hatD$ hold for most familiar 
      varieties $\D$ over a finite signature, e.g.
      groups, semilattices, vector spaces over a finite field;
      see~\cite{Johnstone1982}.  
  \end{enumerate}
\end{expl}

\begin{rem}\label{re:hatD-is-procompletion}
 By~\cite[Remark 
VI.2.4]{Johnstone1982} the category $\hatD$ is the \emph{pro-completion}, i.e.\ 
the free completion under cofiltered limits, of $\D_f$. Hence $\hatD$ is dual to a \emph{locally finitely presentable} category~\cite{Adamek1994}, which entails the following properties:
\begin{enumerate}[(i)]
  \item Every object $A$ of $\hatD$ is the cofiltered limit of all morphisms 
  $h\colon A\to A'$ with finite codomain. More precisely, if $(A\downarrow
  \D_f)$ denotes the comma category of all such morphisms $h$, the diagram
  \[
    (A\downarrow \D_f) \to \hatD, \quad h\mapsto A',
  \] 
  has the limit $A$ with limit projections $h$.  \item Given a cofiltered limit
  cone $\pi_i\colon A\ra A_i$ ($i\in I$) in $\hatD$, any morphism $f\colon A\to
  B$ with finite codomain factors through some $\pi_i$.
\end{enumerate}
\end{rem}

\begin{lemma}
  \label{lem:hatD-factorisation}
  $\hatD$ has the factorisation system of surjective morphisms and injective (resp. order-reflecting) morphisms.  
\end{lemma}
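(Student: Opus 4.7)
The plan is to reduce both the existence of factorisations and the diagonal-fill-in condition to two inputs: the factorisation system already in place on the variety $\D$, and standard point-set topology for compact Hausdorff spaces (a continuous image of a compact space inside a Hausdorff space is closed; every continuous surjection between compact Hausdorff spaces is a quotient map).

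For existence, given $f\colon A \to B$ in $\hatD$, I would factor the underlying $\D$-morphism $Vf$ in $\D$ as $VA \xra{e} C \xra{m} VB$ with $e$ surjective and $m$ injective (order-reflecting in the ordered case). The image $m[C]$ is closed in $B$ by compactness of $A$, so equipping $C$ with the subspace topology inherited via $m$ turns $C$ into a compact Hausdorff topological $\D$-algebra and makes both $e$ and $m$ continuous. The key claim is that $C\in\hatD$: writing $B = \lim_j B_j$ in $\hatD$ with surjective projections $\pi_j$ onto finite algebras $B_j\in\D_f$, I would set $C_j \defeq \pi_j[m[C]] \in \D_f$ and identify $C$ with $\lim_j C_j$ in $\hatD$. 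The inclusion $C \seq \lim_j C_j$ is immediate, and the converse follows from a finite-intersection-property argument in the compact space $C$ together with the joint injectivity of the $\pi_j$.

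For orthogonality, consider a commuting square $g \circ e = m \circ f$ with $e\colon A\epito A'$ surjective and $m\colon B'\monoto B$ injective (order-reflecting) in $\hatD$. The factorisation system on $\D$ yields a unique $\D$-morphism $d\colon A' \to B'$ with $d\circ e = f$ and $m \circ d = g$ at the $\D$-level; in the ordered case, $d$ is automatically order-preserving as a $\D$-morphism. Continuity of $d$ comes from the quotient-map property of $e$: the continuous surjection $e$ between compact Hausdorff spaces is closed, hence a topological quotient, so $d \circ e = f$ being continuous forces $d$ to be continuous as well.

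The step that I expect to require the most care is verifying that the intermediate object $C$ constructed above is genuinely in $\hatD$, as opposed to being merely a compact topological $\D$-algebra. Once $C$ is identified with $\lim_j C_j$, the argument ultimately rests on the well-known facts that closed subspaces of Stone spaces are Stone and closed ordered subspaces of Priestley spaces are Priestley. The order-reflecting refinement in the ordered setting comes for free: the factorisation in $\D$ already delivers an order-reflecting $m$, so restricting the order from $B$ to $C$ via $m$ is exactly what the subspace construction does automatically.
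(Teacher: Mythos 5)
Your argument is correct, but it takes a genuinely different route from the paper's. The paper never topologises an image directly: it uses that the arrow category $\hatD^\to$ is locally finitely copresentable (\Cref{rem:arrowcatclfp}) to express $f$ as a cofiltered limit of morphisms $f_i$ between finite algebras, factorises each $f_i = m_i\o e_i$ in $\D_f$, and takes the cofiltered limit $C$ of the middle objects with $e=\Lim e_i$ and $m=\Lim m_i$; surjectivity of $e$ is \Cref{lem:surjections-between-cofiltered-diagrams}, and injectivity (resp.\ order-reflection) of $m$ follows from the $m_i$ being monic (order-reflecting) together with the limit projections $c_i$ being jointly monic (jointly order-reflecting). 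The payoff of that route is that profiniteness of the intermediate object is automatic --- it is constructed as a cofiltered limit of finite algebras --- at the price of the arrow-category machinery and of checking naturality of the levelwise factorisations via diagonal fill-in in $\D_f$. Your route instead factorises $Vf$ in $\D$, equips the image with the closed-subspace structure of $B$, and must then prove exactly the point the paper gets for free: that the closed subalgebra $m[C]\seq B$ lies in $\hatD$. Your identification $C\cong \lim_j \pi_j[m[C]]$ via compactness/finite-intersection and joint injectivity (joint order-reflection, in the ordered case) of the $\pi_j$ does deliver this; to make it airtight you should add that the resulting continuous bijection $C\to\lim_j C_j$ is an isomorphism in $\hatD$ (continuous bijections of compact Hausdorff spaces are homeomorphisms, and the projections are jointly order-reflecting) and that cofiltered limits in $\hatD$ are computed as in topological $\D$-algebras (\Cref{lem:hatdlimits}). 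Your explicit treatment of the diagonal fill-in --- uniqueness from the factorisation system of $\D$ and continuity of $d$ from $e$ being a closed, hence quotient, map --- addresses a point the paper's proof leaves implicit, where it is subsumed by the compact-Hausdorff clause of the homomorphism theorem in \Cref{rem:homtheorem}(a).
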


\begin{defn}
The \emph{profinite completion} of an object $D\in\D$ is the limit $\hat D\in 
\hatD$ of the cofiltered diagram 
\[
  (D\downarrow \D_f) \to \hatD, \quad (h\colon D \to D') \mapsto D'
\] 
We denote the limit projection corresponding to $h: D\to D'$ by $\hat h\colon \hat D \to D'$. Observe that
$\hat D = D$ for any $D\in \D_f$, and $\widehat h = h$ for any  
morphism 
$h$ in $\D_f$.
\end{defn}

\begin{proposition}\label{prop:profcomp}
 The maps $D\mapsto \hat D$ and $h\mapsto \hat h$ extend to a left adjoint for 
 the forgetful functor $V$, denoted by \[\widehat{\,\cdot\,}\colon \D\to\hatD.\]
\end{proposition}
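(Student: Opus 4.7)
The plan is to establish the adjunction $\widehat{\,\cdot\,} \dashv V$ by constructing a unit $\eta_D \colon D \to V\hat D$ and verifying the universal property: for every $A \in \hatD$ and every $\D$-morphism $f \colon D \to VA$, there is a unique $\hatD$-morphism $\bar f \colon \hat D \to A$ with $V\bar f \circ \eta_D = f$. Naturality of this bijection, and hence functoriality of $\widehat{\,\cdot\,}$, will then follow by standard universal-property arguments.

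A first key observation is that $V \colon \hatD \to \D$ preserves cofiltered limits. This is immediate from \Cref{re:hatD-is-procompletion}: since $\hatD$ is the free cofiltered-limit completion of $\D_f$ and $\D$ admits cofiltered limits, the inclusion $\D_f \hookrightarrow \D$ extends uniquely to a cofiltered-continuous functor $\hatD \to \D$, which must coincide with $V$. Consequently $V\hat D$ is the cofiltered limit in $\D$ of the diagram $(D \downarrow \D_f) \to \D$, $(h\colon D\to D') \mapsto D'$, with projections $V\hat h$. Since the family $(h \colon D\to D')_h$ itself forms a cone over this diagram (the compatibility $k\circ h_1 = h_2$ for $k\colon h_1\to h_2$ being the very defining condition of the slice category), it induces a unique morphism $\eta_D \colon D \to V\hat D$ satisfying $V\hat h \circ \eta_D = h$ for all $h \in (D\downarrow\D_f)$.

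To construct $\bar f$ given $f\colon D\to VA$, I would use \Cref{re:hatD-is-procompletion}(i), which presents $A$ as the cofiltered limit in $\hatD$ of the diagram $(A\downarrow\D_f)\to\hatD$, $(g\colon A\to A') \mapsto A'$. For each such $g$ the composite $Vg\circ f$ is an object of $(D\downarrow\D_f)$ and yields a projection $\widehat{Vg\circ f}\colon \hat D\to A'$; these form a compatible cone over the diagram for $A$ (for $k\colon A'\to A''$ with $k \circ g = g'$, the identity $Vg'\circ f = k\circ Vg\circ f$ makes $k$ a morphism $(Vg\circ f)\to(Vg'\circ f)$ in $(D\downarrow\D_f)$, whence $k \circ \widehat{Vg\circ f} = \widehat{Vg'\circ f}$). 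The universal property of $A$ then delivers a unique $\bar f \colon \hat D \to A$ with $g\circ\bar f = \widehat{Vg\circ f}$ for every $g$, and applying $V$ and precomposing with $\eta_D$ gives $Vg \circ V\bar f \circ \eta_D = Vg \circ f$ for all $g$, so joint monicity of the limit cone $VA = \lim VA'$ forces $V\bar f \circ \eta_D = f$.

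The main obstacle I expect is the uniqueness of $\bar f$. Given any competitor $\bar g\colon \hat D \to A$ with $V\bar g\circ\eta_D = f$, by the limit presentation of $A$ it suffices to show $g\circ\bar g = g\circ\bar f$ for every $g\colon A\to A'$ with $A'\in\D_f$. Here \Cref{re:hatD-is-procompletion}(ii) is essential: the finite-target morphism $g\circ\bar g\colon\hat D\to A'$ factors as $k\circ\hat h_0$ for some $h_0\colon D\to D_0$ in $(D\downarrow\D_f)$ and some $k\colon D_0\to A'$ in $\D_f$. Applying $V$ and precomposing with $\eta_D$ yields $k\circ h_0 = Vg\circ f$, exhibiting $k$ as a morphism $h_0 \to Vg\circ f$ in $(D\downarrow\D_f)$; cone compatibility then gives $g\circ\bar g = k\circ\hat h_0 = \widehat{Vg\circ f} = g\circ\bar f$, completing the argument.
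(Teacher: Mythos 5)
Your overall strategy coincides with the paper's: construct the unit $\eta_D$ as the mediating morphism into $V\hat D$, build $\bar f$ from the limit presentation of $A$ over $(A\downarrow\D_f)$, and verify $V\bar f\circ\eta_D=f$ by joint monicity. Your uniqueness argument, however, is genuinely different from the paper's and is correct: the paper restricts the diagram defining $\hat D$ to surjections, applies \Cref{lem:dense-lemma} to conclude that $\eta_D$ has dense image, and uses Hausdorffness, whereas you factor $g\circ\bar g$ through a limit projection $\hat h_0$ via \Cref{re:hatD-is-procompletion}(ii) and identify it with $\widehat{Vg\circ f}$ by cone compatibility. This purely categorical argument uses only the defining equations $V\hat h\circ\eta_D=h$ and is a clean alternative to the topological density argument; it also makes the final compatibility ``the adjoint extends $h\mapsto\hat h$'' (which the paper checks explicitly and you leave to ``standard arguments'') an immediate consequence, since for finite codomain both $\hat h$ and the transpose of $\eta_{D'}\circ h$ satisfy the same equation.

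There is, however, a genuine gap at your first step: the justification that $V$ preserves cofiltered limits is circular. The universal property of the pro-completion yields a cofiltered-continuous functor $G\colon\hatD\to\D$ extending the inclusion $\D_f\hookrightarrow\D$, unique among \emph{cofiltered-continuous} extensions; but to conclude that $G$ ``must coincide with $V$'' you would already need to know that $V$ is such an extension, i.e.\ that $V$ preserves cofiltered limits --- which is exactly what is being claimed. The substantive fact you need (that $V\hat D$, with the maps $V\hat h$, is the limit in $\D$ of the canonical diagram, and similarly for $VA$) is true, but it requires an argument along the lines of the paper's \Cref{lem:hatdlimits}: one shows that $\hatD$ is closed under cofiltered limits in the category of topological $\D$-algebras, where limits are formed at the level of $\D$. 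Both your construction of $\eta_D$ and the joint-monicity step for the triangle identity rest on this fact, so your proof needs that lemma (or a direct argument of this kind) inserted; with it, the rest of your proposal goes through.
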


\begin{rem}\label{rem:homtheorem}
We will frequently use the following facts:
\begin{enumerate}[(a)]
\item    \emph{Homomorphism theorem.} Given morphisms 
    $e\colon A \epito B$ and $f\colon A \to 
        C$ in $\D$ with $e$ surjective, there exists a morphism $f'$ with $f'\o e = f$ iff 
    $e(a)=e(a')$ 
    implies $f(a)=f(a')$ (resp.\ $e(a)\leq e(a')$ implies $f(a)\leq f(a')$) for all 
    $a,a'\in A$. Moreover, if $A,B,C$ are topological $\D$-algebras with a compact 
      Hausdorff topology and $e$ and $f$ are continuous $\D$-morphisms, then 
      $f'$ is continuous.
\item The forgetful functor $\left|\mathord{-}\right|\colon \D\to\Set$ has a left 
adjoint assigning to each set $X$ the free $\D$-algebra $\Phi_X$ on $X$. 
\item Free $\D$-algebras are projective: for any morphism $f\colon \Phi_X \to B$ 
and any surjective morphism  $e\colon A\epito B$ in $\D$ there exists a morphism 
$f'\colon \Phi_X\to A$ with $e\o f' = f$. Indeed, choose a function $m\colon 
\left|B\right| \to
    \left|A\right|$ with $e\o m = \id$. Then the restriction of $m\o f$ to $X$
    extends to a morphism $f'\colon \Phi_X\to A$ of $\D$ with $f=f'\o e$, since the
    morphisms on both sides agree on the generators $X$.
\end{enumerate} 
\end{rem}

\begin{notation}
  For a monad $\MT = (T, \eta, \mu)$ on $\D$, we write $\D^\MT$ for the category of
  $\MT$-algebras and $\MT$-homomorphisms, and $\D_f^\MT$ for the 
  full subcategory of finite $\MT$-algebras. The forgetful functors are denoted by
  \[ U\colon \D^\MT_f \to \D_f\quad\text{and}\quad U^\MT\colon \D^\MT 
  \to 
  \D.
  \] Recall that $U^\MT$ has a left adjoint
  mapping~$D\in\D$ to its free $\MT$-algebra~$(TD, \mu_D)$.
\end{notation}

\begin{rem}
  \label{re:factorisation_DT}
 If $T$ preserves surjective morphisms, the homomorphism theorem applies 
 to 
   $\MT$-algebras. That is, if $A,B,C$ in 
   \Cref{rem:homtheorem}(a) are $\MT$-algebras and $e$ and $f$ are 
   $\MT$-homomorphisms, so is $f'$.  Moreover the 
   factorisation system of $\D$ lifts
  to~$\D^\MT$: every $\MT$-homomorphism $h\colon (A,\alpha)\to (B,\beta)$ can be 
  factorised into a 
  surjective $\MT$-homomorphism followed by an injective (resp. 
  order-reflecting) one. 
 \emph{Quotients} and \emph{subalgebras} of $\MT$-algebras are 
  taken w.r.t. this factorisation system.
\end{rem}

\begin{expl}\label{ex:monads}
We are mainly interested in monads representing structures in algebraic 
language theory.
\begin{enumerate}[(a)]
\item \emph{Finite words.} The classical example is the free-monoid monad 
$\MT$ on $\D=\Set$,
\[ TX = X^* = \coprod_{n<\omega} X^n.\]
The importance of the monad 
$\MT$ is that functions $TX\to \{0,1\}$ correspond to languages of finite 
words over the alphabet $X$, and regular languages are precisely the 
languages recognized by finite $\MT$-algebras (= finite monoids).
Boja\'nczyk~\cite{Bojanczyk2015} recently gave a generalisation of the
classical 
Eilenberg theorem to arbitrary monads $\MT$ on $\Set$, relating 
pseudovarieties of 
finite $\MT$-algebras to varieties of 
$\MT$-recognisable languages.
\item \emph{Finite words over semilattices.} From the perspective of algebraic 
language theory it is natural to study monoids in algebraic categories beyond 
$\Set$. For example, let $\D=\JSL$ be the variety of 
join-semilattices with $0$, considered as a monoidal category 
w.r.t. the usual 
tensor 
product. The free-monoid monad on $\JSL$ is given by
\[ TX = X^\circledast = \coprod_{n<\omega} X^{\otimes n}, \]
the coproduct of all finite tensor powers of $X$, and $\MT$-algebras are 
precisely idempotent semirings. In case $X=\Pow_f X_0$ is the free semilattice 
on a set $X_0$ one has $TX = \Pow_f{X_0^*}$, the semilattice of 
all finite languages over $X_0$. Hence semilattice morphisms from $TX$ into 
the  two-chain $0<1$ correspond again to formal languages over $X_0$. This 
setting allows one to study \emph{disjunctive} varieties of languages in the sense 
of Pol\'ak~\cite{polak01}, see~\cite{Adamek2014c,Adamek2015,Adamek2015b}. 
Note that although the variety of idempotent semirings can also be represented 
by 
the free idempotent semiring monad  $T'X = \Pow_f X^*$ on $\Set$, 
\emph{functions} from $T'X = \Pow_f X^*$ to $\{0,1\}$ 
do not correpond to formal languages over $X$.  
\item \emph{Infinite words.} The monad
\[ TX = X^\infty = X^* + X^\omega \]
on $\D=\Set$ represents languages of finite and infinite words. The unit 
$\eta_X\colon X\to X^*$ is given by inclusion, and the multiplication
$\mu_X\colon 
(X^\infty)^\infty\to X^\infty$ is concatentation: $\mu_X (w_0w_1w_2\ldots)= 
w_0w_1w_2\ldots$ if all words $w_i$ are finite, and otherwise $\mu_X 
(w_0w_1w_2\ldots)= w_0w_1w_2\ldots w_j$ for the smallest $j$ with $w_j$ 
infinite. $\MT$-algebras are  \emph{$\infty$-monoids}, i.e.\ monoids with an 
additional 
$\omega$-ary multiplication and the expected mixed associative laws. Again, 
functions from $TX$ to $\{0,1\}$ correspond to languages (of finite and 
infinite words), and 
$\omega$-regular languages are precisely the languages recognised by finite 
$\infty$-monoids. This was observed by Boja\'nczyk~\cite{Bojanczyk2015}, who 
also derived an Eilenberg-type theorem for varieties of $\omega$-regular 
languages and pseudovarieties of
$\infty$-monoids along the lines of Wilke~\cite{wilke91}. As in (b) one can 
replace $\infty$-monoids in $\Set$ by 
``idempotent $\infty$-semirings'', viewed as algebras for a 
suitable monad on $\JSL$, and thus extend Pol\'ak's theorem \cite{polak01} 
from finite word languages to 
$\omega$-regular languages. We leave the details for future work.
\item In contrast to the previous examples, the category $\D^\MT$ is not always
monadic over $\Set$ resp. $\Pos$.  To see this, let $\D=\Set_{0,1}$ be the 
variety of sets with two constants, that is, the 
  category of all algebras over the signature with two constant symbols $0,1$. 
  The full subcategory $\Set_{0\neq 1}$, consisting of singletons and  sets 
  with distinct constants $0 \neq 1$, is reflective and hence monadic over 
  $\Set_{0,1}$. However, it is not monadic over $\Set$.
\end{enumerate}
\end{expl}

\section{Profinite Monads} \label{sec:profinite-monad}
In this section we introduce profinite monads, our main tool for 
the investigation of profinite equations and Reiterman's theorem for 
$\MT$-algebras in \Cref{sec:reiterman}.  

\begin{assumption}\label{asm:sec3}
  As in the previous section let $\D$ be a variety of algebras or ordered 
  algebras. Moreover, let $\MT=(T,\eta,\mu)$ be a monad on $\D$ such that $T$ 
  preserves surjective morphisms.
\end{assumption}
Recall that the \emph{right Kan extension} of a functor $F\colon \A \to \C$ 
along~$K\colon \A \to
\B$ is a functor~$R\colon \B \to \C$ with a universal natural
transformation $\epsilon\colon RK \to F$, i.e. for every functor
$G\colon \B \to \C$ and every natural transformation~$\gamma\colon GK \to F$ 
there
exists a unique natural transformation $\gamma^\dagger\colon G \to R$ with
$\gamma = \epsilon \o \gamma^\dagger K$. In case $F=K$, the functor $R$ carries
a natural monad structure: the unit is given by $\hateta=(\id_K)^\dagger: \Id\to
R$ and the multiplication by $\hatmu = (\epsilon\o R\epsilon)^\dagger: RR\to R$.
The monad $(R,\hateta,\hatmu)$ is called the \emph{codensity monad} of $K$, see
e.g., \cite{Linton1969}.

\begin{defn}\label{def:profmon}
  The \emph{profinite monad} of~$\MT$ is the codensity monad~$\hatT =
  (\hatt, \hateta, \hatmu)$ of the functor
  \[
    K=\hatJ U \colon \D_f^\MT \to \D_f 
    \to \hatD.
  \]
\end{defn}

\begin{rem}\label{rem:codensitymonad}
A related concept was recently studied by Boja\'nczyk~\cite{Bojanczyk2015} who
associates to every monad $\MT$ on $\Set$ a 
monad $\overline \MT$ on $\Set$ (rather than $\widehat{\Set}=\Stone$ as in our 
setting!).  
Specifically,
$\overline \MT$ is the monad induced by the composite right adjoint 
$\Stone^{\hatT} \to \Stone \xra{V} \Set$. Its construction also appears in the
work of Kennison and Gildenhuys~\cite{Kennison1971} who investigated codensity 
monads for $\Set$-valued functors and their connection with profinite algebras.
\end{rem}

\begin{rem}\label{rem:hattconst}
\begin{enumerate}[(a)]
\item One can compute $\hatt X$ for $X\in\hatD$ via the limit formula for 
right Kan extensions, see e.g.~\cite[Theorem X.3.1]{maclane}.  Letting $(X 
\downarrow \hatJ U)$ denote the comma category of all arrows $f\colon X\to A$ with 
$(A,\alpha)\in \D_f^\MT$, the object $\hatt X$ is the limit of the diagram
\[
  (X \downarrow \hatJ U) \to \hatD,\quad f\mapsto A.
\]
\item For $D\in\D$ a morphism $f\colon \hat D \to A$ with 
$(A,\alpha)\in\D^\MT_f$ corresponds to a $\MT$-homomorphism $h\colon (TD, 
\mu_D)\to(A, 
\alpha)$, since $(TD,\mu_D)$ is the free $\MT$-algebra on $D$.
  Hence to compute $\hatt \hat D$ one can replace $(\hat D \downarrow \hatJ 
  U)$ by
 the category of all $h\colon (TD, \mu_D)\to (A, \alpha)$ with $(A,\alpha)\in\D^\MT_f$. We denote the limit cone by
  \begin{equation}\label{eq:hatT-as-limit'}
    \br{h}\colon \hatt \hat{D} \to \hat{A}.
  \end{equation}
One can restrict the diagram defining $\hatt \hat D$ to surjective $\MT$-homomorphisms:
\end{enumerate}
\end{rem}

\begin{proposition}
  \label{prop:characterisation-hatT}
  For all $D\in \D$ the object $\hatt \hat{D}$ is the cofiltered limit of all 
  finite
      $\MT$-algebra quotients $e\colon (TD, \mu_D) \twoheadrightarrow (A, \alpha)$.
\end{proposition}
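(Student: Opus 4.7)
The plan is to compare two cofiltered limits: the one from \Cref{rem:hattconst}(b) indexed by \emph{all} $\MT$-homomorphisms $h\colon (TD,\mu_D)\to (A,\alpha)$ into finite $\MT$-algebras, and the one claimed here, indexed only by the surjective such homomorphisms. Writing $\mathcal{J}$ for the former category (with morphisms $h_1\to h_2$ given by $\MT$-homomorphisms $f$ satisfying $f\o h_1=h_2$) and $\mathcal{J}'$ for the full subcategory of surjective $h$, I would show that the inclusion $\iota\colon \mathcal{J}'\hookrightarrow \mathcal{J}$ is \emph{initial}, i.e.\ for every $h\in\mathcal{J}$ the comma category $(\iota\downarrow h)$ is nonempty and connected. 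Since the diagram $h\mapsto A$ factors through $\iota$, this will imply that the two limits agree, yielding the desired description of $\hatt\hat{D}$.

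For nonemptiness of $(\iota\downarrow h)$, factor $h=m\o e$ using the factorisation system of \Cref{re:factorisation_DT}, yielding a surjective $e\colon (TD,\mu_D)\twoheadrightarrow (B,\beta)$ and an injective $m\colon (B,\beta)\rightarrowtail (A,\alpha)$. Then $(B,\beta)$ is finite as a subalgebra of $(A,\alpha)$, so $e\in\mathcal{J}'$ and the pair $(e,m)$ is an object of $(\iota\downarrow h)$.

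For connectedness, given $(e_i,f_i)\in(\iota\downarrow h)$ for $i=1,2$ with $e_i\colon (TD,\mu_D)\twoheadrightarrow (B_i,\beta_i)$ and $f_i\o e_i=h$, I would factor the pair map $\langle e_1,e_2\rangle\colon (TD,\mu_D)\to (B_1,\beta_1)\times(B_2,\beta_2)$ as a surjective $\MT$-homomorphism $e\colon (TD,\mu_D)\twoheadrightarrow (B,\beta)$ followed by an injective one into the product. Post-composing the latter with the product projections yields surjective $\MT$-homomorphisms $\pi_i\colon (B,\beta)\twoheadrightarrow (B_i,\beta_i)$ with $\pi_i\o e=e_i$. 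Since $f_i\o\pi_i\o e=f_i\o e_i=h$ for both $i$, the homomorphism theorem (\Cref{rem:homtheorem}(a) as lifted in \Cref{re:factorisation_DT}) applied to the surjective $e$ gives $f_1\o\pi_1=f_2\o\pi_2=:f$. Hence $(e,f)\in(\iota\downarrow h)$ and the $\pi_i$ are morphisms $(e,f)\to (e_i,f_i)$ in the comma category, connecting the given pair of objects. This same product construction, combined with the fact that parallel arrows in $\mathcal{J}'$ automatically agree by the homomorphism theorem, also verifies that $\mathcal{J}'$ is cofiltered (in fact a poset), so the resulting limit is genuinely cofiltered.

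The main obstacle is just the careful bookkeeping of the comma-category data (separating the morphism structure of $\mathcal{J}'$ from the extra compatibility with $h$); both of the substantive ingredients---that initial functors preserve limits, and that the factorisation system and homomorphism theorem lift to $\D^\MT$ under \Cref{asm:sec3}---are available off the shelf from earlier in the paper.
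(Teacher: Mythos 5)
Your proposal is correct and follows essentially the same route as the paper: factorise each $\MT$-homomorphism $h\colon (TD,\mu_D)\to(A,\alpha)$ through the lifted factorisation system of \Cref{re:factorisation_DT} to see that the subdiagram of surjective homomorphisms is initial, hence yields the same limit. The only difference is that you spell out the connectedness of the comma categories and the cofilteredness of the quotient subdiagram (via finite products and the homomorphism theorem), details the paper's one-line proof leaves implicit.
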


\begin{expl}[Profinite words] For the monad $TX=X^*$ on $\D=\Set$
the profinite monad $\hatT$ assigns to every finite set (= finite Stone space) 
$X$ 
the space $\hatt X = \widehat{X^*}$ of profinite words over $X$. This is the 
limit in $\Stone$ of all finite (discrete) quotient monoids of $X^*$. Similarly, for $TX = X^\infty$ the profinite monad $\hatT$ 
constructs the space $\hatt X$ of ``profinite $\infty$-words'' over $X$.
\end{expl}

\begin{lemma}\label{lem:hattcoflimsurj}
 \begin{enumerate}[(a)]
\item $\hatt$ preserves cofiltered limits and surjections.
\item Given a cofiltered limit cone $h_i\colon A\ra A_i$ ($i\in I$) in
    $\hatD^{\hatT}$, any $\hatT$-homomorphism $h\colon A\to B$ with finite 
    codomain 
    factors through 
    some $h_i$.
 \end{enumerate}
\end{lemma}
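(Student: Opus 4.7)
The plan is to prove (a) first and then derive (b). For the preservation of cofiltered limits I exploit the right-Kan-extension formula $\hatt X = \limit_{(X \downarrow K)} K$ with $K = \hatJ U$. Given a cofiltered limit $X = \limit_i X_i$ in $\hatD$ with projections $\pi_i$, \Cref{re:hatD-is-procompletion}(ii) says that every morphism $f \colon X \to KA$ into some $A \in \D_f^\MT$ factors through some $\pi_i$. A routine cofinality argument then identifies $(X \downarrow K)$ with the filtered colimit of the $(X_i \downarrow K)$, and Fubini for limits yields $\hatt X = \limit_i \hatt X_i$.

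For surjection preservation I reduce to the case of a surjection $e \colon A \epito B$ in $\D_f$. This reduction is available because every surjection in $\hatD$ arises as a cofiltered limit of surjections between finite $\D$-algebras, and cofiltered limits of such finite surjections are again surjective by a standard Mittag--Leffler argument (a consequence of compactness in $\Pro \D_f$). For the finite case, \Cref{asm:sec3} ensures $Te \colon TA \epito TB$ is surjective, so every finite $\MT$-quotient $q \colon (TB, \mu_B) \epito (C, \gamma)$ induces a finite $\MT$-quotient $q \cdot Te \colon (TA, \mu_A) \epito (C, \gamma)$. By \Cref{prop:characterisation-hatT} this embeds the cofiltered diagram computing $\hatt B$ into the one computing $\hatt A$, and the surjectivity of $\hatt e$ then follows because each limit projection $\hatt A \to C$ is already surjective. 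I expect this surjection step to be the main obstacle, as it requires both the reduction to finite surjections and careful tracking of the Mittag--Leffler argument.

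For part (b), since $\hatt$ preserves cofiltered limits by (a), the forgetful functor $U^{\hatT} \colon \hatD^{\hatT} \to \hatD$ creates cofiltered limits; in particular $A = \limit_i A_i$ is also a cofiltered limit in $\hatD$. \Cref{re:hatD-is-procompletion}(ii) applied to the finite object $B$ then yields a factorisation $h = g \cdot h_i$ in $\hatD$ for some $i$ and some $g \colon A_i \to B$. It remains to refine the index so that $g$ becomes a $\hatT$-homomorphism. A short calculation using that both $h$ and $h_i$ are $\hatT$-homomorphisms gives $g \alpha_i \cdot \hatt h_i = h \alpha = \beta \hatt g \cdot \hatt h_i$, so the two parallel maps $g \alpha_i, \beta \hatt g \colon \hatt A_i \to B$ agree after precomposition with $\hatt h_i$. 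Since $\hatt A = \limit_j \hatt A_j$ by (a) and $B$ is finite, \Cref{re:hatD-is-procompletion}(ii) supplies a refinement $r \colon A_j \to A_i$ (so that $h_i = r \cdot h_j$) with $g \alpha_i \cdot \hatt r = \beta \hatt g \cdot \hatt r$. Using that $r$ is a $\hatT$-homomorphism to rewrite, one obtains $(g r) \alpha_j = \beta \hatt (g r)$, so $g r \colon A_j \to B$ is the sought $\hatT$-homomorphism and $h = (g r) \cdot h_j$ is the desired factorisation.
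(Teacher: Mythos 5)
Your proposal is correct and, for part (a), it is essentially the paper's own argument in different clothing: the limit-preservation step rests, as in the paper, on finite objects being finitely copresentable in $\hatD$ (the paper packages this via the end formula $\hatt X = \int_{(A,\alpha)} \hatD(X,\hat A)\pitchfork \hat A$, you via cofinality of the comma categories $(X_i\downarrow \hatJ U)$ — equivalent content), and the surjection step has the same two-stage shape: for finite $e$, precompose finite quotients of $(TB,\mu_B)$ with $Te$ and use \Cref{prop:projection-is-surjective} together with the cone-of-surjections part of \Cref{lem:surjections-between-cofiltered-diagrams}; for general $e$, reduce to a cofiltered limit of surjections in $\D_f$. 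Two small imprecisions there: the reduction claim (every surjection in $\hatD$ is a cofiltered limit of surjections in $\D_f$) is exactly the nontrivial point the paper settles via \Cref{rem:arrowcatclfp} and a diagonal fill-in/initiality argument in $\hatD^\to$, so it deserves proof rather than assertion, and the Mittag--Leffler lemma is what you need applied to the limit of the maps $\hatt e_i$ (to conclude $\hatt e$ surjective), not to the finite surjections themselves. Part (b) is where you genuinely diverge: the paper deduces it from the stronger structural fact that finite $\hatT$-algebras are finitely copresentable in $\hatD^{\hatT}$, by citing a general lemma about algebras for cofinitary functors on locally finitely copresentable categories and using that $\hatD^{\hatT}$ is closed under limits in $\Alg(\hatt)$; you instead give a direct, self-contained two-step factorisation — first factor $h$ in $\hatD$ through some $h_i$, then upgrade the factorising map to a $\hatT$-homomorphism after a refinement, using that $\hatt$ preserves cofiltered limits. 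Your route avoids the external citation and proves exactly what the lemma states; the paper's route is shorter given the citation and yields the stronger copresentability statement (including essential uniqueness) that is reusable elsewhere. One caveat on your refinement step: it uses the ``uniqueness'' half of finite copresentability of $B$ (two maps $\hatt A_i\to B$ equalised by the limit projection are already equalised by some connecting map $\hatt r$), which is more than the bare existence statement of \Cref{re:hatD-is-procompletion}(ii) that you cite; it does hold, since finite objects are finitely copresentable in $\hatD$, but you should invoke that fact explicitly. (Also, $U^{\hatT}$ creates all limits for any monad, so part (a) is not needed to know that $A=\limit_i A_i$ in $\hatD$ — it is needed only to know $\hatt A = \limit_i \hatt A_i$.)
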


\begin{rem}\label{rem:homtheoremhatt}
\begin{enumerate}[(a)]
  \item Since $\hatt$ preserves surjections, the factorisation system of $\hatD$
    lifts to $\hatD^{\hatT}$, so we can speak about \emph{quotients} and
    \emph{subalgebras} of $\hatT$-algebras. Moreover, the homomorphism theorem
    holds for $\hatT$-algebras, cf.  \Cref{re:factorisation_DT}.
  \item \Cref{lem:hattcoflimsurj}(b) exhibits a crucial technical difference
    between our profinite monad $\hatT$ and Boja\'nczyk's $\overline{\MT}$, see
    \Cref{rem:codensitymonad}. For example, for the identity monad $\MT$
    on~$\Set$, the monad $\overline\MT$ is the ultrafilter monad whose algebras
    are compact Hausdorff spaces, and the factorisation property in the lemma
    fails.
\end{enumerate} 
\end{rem}

\begin{rem}\label{rem:alphaplus}
For each finite $\MT$-algebra $(A,\alpha)$ the morphism $\alpha$ is itself
  a $\MT$-homomorphism $\alpha\colon (TA,\mu_A)\epito (A,\alpha)$, and thus yields the limit projection  \[
      \br{\alpha} \colon \hatt \hat{A} \to \hat{A}
    \] of \eqref{eq:hatT-as-limit'}. The unit $\hat\eta_{\hat D}$ and multiplication $\hat \mu_{\hat D}$ of $\hatT$ are 
determined by the following commutative diagrams for all $\MT$-homomorphisms 
$h\colon (TD, \mu_D)\to (A, \alpha)$:
\begin{equation} \label{eq:hat-eta}
        \vcenter{
          \xymatrix{
            \hat{D} \ar[r]^{\hat{\eta}_{\hat{D}}} \ar[rd]_{\widehat{h \eta_D}}
            & \hatt \hat{D} \ar[d]^{\br{h}} &&  {\hatt \hatt \hat{D}} 
            \ar[r]^{\hat\mu_{\hat{D}}}
                                  \ar[d]_{\hatt \br{h}} & \hatt \hat{D}
                                  \ar[d]^{\br{h}} \\
            & \hat{A} && {\hatt \hat{A}} \ar[r]_{\br{\alpha}} & \hat{A}
          }
        }
\end{equation}
Hence $(\hat A, \alpha^+)$ is a $\hatT$-algebra: the unit and associative law for $\hatT$-algebras follow by putting $D =
A$ and~$h = \alpha$ in \eqref{eq:hat-eta}. Moreover, \eqref{eq:hat-eta}  states
precisely that $\br{h}\colon (\hatt \hat D, 
\hat\mu_{\hat D}) \to (A,\br{\alpha})$ is the unique $\hatT$-homomorphism 
extending the map $\widehat{h\eta_D}$ for every $h$ as above.
\end{rem}

\begin{proposition}\label{prop:finite-algebras}
 The maps $(A, \alpha) \mapsto (\hat{A}, \br{\alpha})$ and $h
  \mapsto \widehat{h}$ define an isomorphism between the categories of finite 
  $\MT$-algebras and finite $\hatT$-algebras:
   \[ \D_f^\MT \cong \hatD_f^{\hatT}. \]
\end{proposition}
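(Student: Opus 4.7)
The functor in question sends $(A,\alpha)\in \D_f^\MT$ to $(\hat A,\alpha^+) = (A,\alpha^+)$ in $\hatD_f^{\hatT}$ (using $\hat A = A$ and $\hat h = h$ for finite $A$ and morphisms in $\D_f$), and is already known by \Cref{rem:alphaplus} to land in finite $\hatT$-algebras. My plan is to show this assignment is an isomorphism of categories by (i) verifying functoriality, then (ii) establishing bijectivity on hom-sets, and (iii) constructing an inverse on objects.

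\textbf{Functoriality and fullness/faithfulness.} Given a $\MT$-homomorphism $h\colon (A,\alpha)\to (B,\beta)$, the equation $h\alpha = \beta\circ Th$ expresses that both $h\alpha$ and $\beta\circ Th$ are the same $\MT$-homomorphism $(TA,\mu_A)\to (B,\beta)$; this morphism sits in the limit diagram defining $\hatt\hat A$, so its projection coincides with both $h\circ\alpha^+$ (obtained by postcomposing the $\alpha$-projection with $h$) and $\beta^+\circ \hatt h$ (obtained by functoriality of $\hatt$ applied to the naturality of limit cones). Hence $h\alpha^+ = \beta^+\circ\hatt h$ and $h$ is a $\hatT$-homomorphism. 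For the converse direction (fullness), the construction in the next step shows that $\alpha$ can be recovered from $\alpha^+$ by a natural formula, so precomposing $h\alpha^+ = \beta^+\hatt h$ with this recovery map yields $h\alpha = \beta\circ Th$. Faithfulness is immediate since $F$ acts as the identity on underlying maps.

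\textbf{Bijection on objects.} Since $\hatJ$ is fully faithful and identifies $\D_f$ with the finite objects of $\hatD$, it suffices to show that $\MT$-algebra structures on a fixed finite $A$ correspond bijectively to $\hatT$-algebra structures on $\hatJ A$. Given $\beta\colon \hatt A \to A$ a $\hatT$-algebra, I construct the inverse as follows. As $V\colon \hatD\to\D$ is a right adjoint (\Cref{prop:profcomp}), it preserves the cofiltered limit defining $\hatt\hat A$, so $V\hatt\hat A$ is the limit in $\D$ of the codomains of all $\MT$-homomorphisms $h\colon (TA,\mu_A)\to(B,\gamma)$ with $(B,\gamma)\in\D_f^\MT$. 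These morphisms $h$ themselves form a cone from $TA$, inducing a canonical $\lambda_A\colon TA\to V\hatt\hat A$ in $\D$. I then set $\alpha \defeq V\beta\circ \lambda_A\colon TA\to A$. The unit law $\alpha\eta_A = \id_A$ follows from $\lambda_A\circ\eta_A = V\hateta_{\hat A}$ (an immediate consequence of the left square of \eqref{eq:hat-eta} with $D=A$) combined with the unit law for $\beta$. Associativity requires showing $\alpha\circ T\alpha = \alpha\circ\mu_A$, which is proved by postcomposing with each limit projection $\br{h}$ and using the right square of \eqref{eq:hat-eta} to convert the statement into the associative law for $\beta$, together with the naturality of $\lambda$ (namely $V\hatt h\circ\lambda_A = h$ for $h\colon TA\to B$ a $\MT$-homomorphism, and $V\hatt\lambda_A \circ \lambda_{TA} = \lambda_A\circ T\lambda_A$ up to the limit description). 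Finally, $\alpha^+ = \beta$ because both are the $\alpha$-component of the cone over the diagram for $\hatt A$: by construction, $V\alpha^+\circ\lambda_A = \alpha = V\beta\circ\lambda_A$, and the universal property of the limit then forces $\alpha^+ = \beta$.

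\textbf{Main obstacle.} The delicate point is the associativity verification and, intertwined with it, the proof that $\lambda$ is natural enough to interact properly with $\mu$ and $\hatmu$. The necessary identities essentially unfold the codensity definition of $\hatmu$ given in \eqref{eq:hat-eta} (right) and rely on \Cref{asm:sec3} that $T$ preserves surjections, so that \Cref{prop:characterisation-hatT} lets one restrict the defining cofiltered diagram to surjective $\MT$-homomorphisms and invoke the homomorphism theorem for both $\MT$- and $\hatT$-algebras to test equalities projection-by-projection.
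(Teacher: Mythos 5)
Your construction is in essence the paper's: your $\lambda_A$ is precisely the component at a finite $A$ of the natural transformation $\phi\colon TV\to V\hatt$ of \Cref{re:forgetful-monad-morphism} (compare \eqref{eq:phix}), and the paper's inverse functor is exactly $(A,\beta)\mapsto (VA,\,V\beta\circ\phi_A)$. The functoriality/faithfulness part of your argument is fine; your projection computation $h\circ\br{\alpha}=\br{h\alpha}=\br{\beta\circ Th}=\br{\beta}\circ\hatt h$ is a correct variant of the paper's freeness argument. But the two places where the real work sits are exactly the ones you leave vague, and both contain genuine gaps. For associativity of $\alpha\defeq V\beta\circ\lambda_A$ the standard computation needs (i) naturality of $\lambda$ at the structure map $\beta\colon\hatt\hat{A}\to A$, which requires a component of $\lambda$ at the object $\hatt\hat{A}$ --- this object is not of the form $\hat D$, so your cone construction does not supply it --- and (ii) the multiplication law of a monad morphism, $\phi_{\hat A}\circ\mu_A=V\hatmu_{\hat A}\circ\phi_{\hatt\hat A}\circ T\phi_{\hat A}$. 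The identities you offer in their place do not even typecheck: $\hatt\lambda_A$ is undefined because $\lambda_A$ is a morphism of $\D$, not of $\hatD$, and $\lambda_A\circ T\lambda_A$ is not composable; the phrase ``up to the limit description'' hides exactly the missing content. In the paper this content is \Cref{prop:forgetful-functor}(a) ($\phi$ is a monad morphism), proved by a pasting argument from the universal property of the codensity monad, after which $(VA,V\beta\circ\phi_A)$ is a $\MT$-algebra by the standard Eilenberg--Moore lifting with no hand verification. Your proposal in effect assumes this law rather than proving it.

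The second gap is the step ``$\br{\alpha}=\beta$ by the universal property of the limit''. Both $\br{\alpha}$ and $\beta$ are morphisms \emph{out of} the limit $\hatt\hat{A}$; knowing that they agree after precomposition with $\lambda_A$ yields nothing from any universal property, since $\lambda_A$ is neither a projection comparison nor epimorphic on the nose. What actually closes this in the paper is \Cref{prop:forgetful-functor}(b): for finite $A$ the map $\phi_A=\lambda_A$ has \emph{dense} image (using \Cref{prop:characterisation-hatT} to restrict the defining diagram to surjective quotients of $(TA,\mu_A)$ and then \Cref{lem:dense-lemma}), and two continuous maps $\hatt\hat{A}\to A$ into the Hausdorff space $A$ agreeing on a dense subset coincide; faithfulness of $V$ then gives $\br{\alpha}=\beta$. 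You gesture at the restriction to surjective homomorphisms in your closing paragraph, but you never formulate or use the density statement, so the half of the bijection asserting that every finite $\hatT$-algebra structure arises from a $\MT$-algebra structure (and the uniqueness of the latter) is not established as written.
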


\section{Reiterman's Theorem for $\MT$-Algebras}\label{sec:reiterman}

Reiterman's theorem~\cite{Reiterman1982,Banaschewski1983} states that, for any variety $\D$ of 
 algebras, a class of finite algebras in $\D$ is a 
\emph{pseudovariety},
i.e.\ closed under finite products, subobjects and quotients, iff it
is presented by \emph{profinite equations}. Later Pin 
and Weil~\cite{PinWeil1996}
proved the corresponding result for varieties $\D$ of ordered algebras:  
pseudovarieties are precisely the classes of finite algebras in $\D$
presented by \emph{profinite inequations}. In our categorical setting these two
theorems represent the case where $\MT$ is chosen to be the identity monad on 
$\D$. In \Cref{sec:pseudovar} we introduce pseudovarieties 
and profinite (in-)equations for arbitrary monads $\MT$ on $\D$, a 
straightforward extension of the original notions. In \Cref{sec:pseudovargen} we
present a further generalisation and prove the main result of this paper,
Reiterman's theorem for finite $\MT$-algebras.

\subsection{Pseudovarieties and profinite (in-)equations}\label{sec:pseudovar}

Let us start with extending the classical concept of a pseudovariety to $\MT$-algebras.

\begin{defn}\label{def:pseudovar}
A \emph{pseudovariety of $\MT$-algebras} is a class of finite $\MT$-algebras 
closed 
under finite products, subalgebras and quotients.
\end{defn}

\begin{notation}
Recall from \Cref{rem:homtheorem} the forgetful functor 
$\left|\mathord{-}\right|\colon \D\ra \Set$ and its left adjoint $X\mapsto 
\Phi_X$. For any finite $\MT$-algebra $(A,\alpha)$ to interpret variables from 
a finite set $X$ in 
$A$ means to give a morphism $h_0\colon \Phi_X\to A$ in $\D$, or equivalently a 
$\MT$-homomorphism 
$h\colon (T\Phi_X,\mu_{\Phi_X})\to (A,\alpha)$. The
corresponding $\hatT$-homomorphism is denoted
$\br{h}\colon \hatt\hat{\Phi_X} \to A$, 
see \Cref{rem:hattconst} and \ref{rem:alphaplus}.
\end{notation}

\begin{defn}
\begin{enumerate}
\item Let $\D$ be a variety of unordered algebras. By a \emph{profinite 
equation} over a finite set $X$ of variables is meant a pair 
$u,v\in\hatt\hat{\Phi_X}$, denoted $u=v$. A finite $\MT$-algebra $(A,\alpha)$ 
\emph{satisfies} $u=v$ provided that
\[ \br{h} (u) = \br{h}(v) \quad\text{for all $\MT$-homomorphisms } h\colon 
T\Phi_X\to A.\]
\item Let $\D$ be a variety of ordered algebras. A \emph{profinite 
inequation} over a finite set $X$ of variables is again a pair 
$u,v\in\hatt\hat{\Phi_X}$, denoted $u\leq v$. A finite $\MT$-algebra 
$(A,\alpha)$ 
\emph{satisfies} $u\leq v$ provided that
\[ \br{h}(u) \leq \br{h}(v) \quad\text{for all $\MT$-homomorphisms } h\colon 
T\Phi_X\to A.\]
\end{enumerate}
A class $E$ of profinite (in-)equations \emph{presents} the class of all 
finite $\MT$-algebras that satisfy all (in-)equations in $E$.
\end{defn}

\begin{lemma}\label{lem:eqtovar}
Every class of finite $\MT$-algebras presented by profinite 
(in-)equations forms a pseudovariety.
\end{lemma}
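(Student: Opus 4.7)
The overall plan is to reduce to a single profinite (in-)equation: it suffices to show that for any fixed $u = v$ (resp.\ $u \leq v$) over a finite set $X$, the class $\V$ of finite $\MT$-algebras satisfying it is closed under finite products, subalgebras, and quotients, since the class presented by $E$ is the intersection of the classes presented by its elements. The key technical ingredient used throughout is the naturality relation
\[
  e \circ h^+ \;=\; (e \circ h)^+
\]
for every $\MT$-homomorphism $e\colon (A, \alpha) \to (B, \beta)$ in $\D_f^\MT$ and every $\MT$-homomorphism $h\colon (T\Phi_X, \mu_{\Phi_X}) \to (A, \alpha)$. This is just the cone property of the family $\{h^+\}$ for the cofiltered-limit presentation of $\hat{T}\widehat{\Phi_X}$ from \Cref{prop:characterisation-hatT} and \Cref{rem:hattconst}(b): $e$ is a morphism in the indexing comma category from $h$ to $eh$, and the projections $(-)^+$ must commute with such morphisms.

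Granted this, products and subalgebras are immediate. For products, given $A_1, A_2 \in \V$ and any $h\colon T\Phi_X \to A_1 \times A_2$, set $h_i = \pi_i \circ h$ and apply naturality to each projection: $\pi_i \circ h^+ = h_i^+$. Since $h_i^+(u) = h_i^+(v)$ (resp.\ $\leq$) by assumption on $A_i$, both coordinates of $h^+(u)$ and $h^+(v)$ satisfy the required (in-)equality, and so does the pair (the order on a product of ordered algebras is coordinatewise). The empty product is trivial. For subalgebras, let $m\colon A \rightarrowtail B$ be an injective (resp.\ order-reflecting) $\MT$-homomorphism with $B \in \V$. For $h\colon T\Phi_X \to A$, naturality yields $m \circ h^+ = (mh)^+$, so $(mh)^+(u) = (mh)^+(v)$ (resp.\ $\leq$) together with injectivity (resp.\ order-reflection) of $m$ delivers the corresponding (in-)equality for $h^+$.

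The quotient case is slightly more delicate because a map $h\colon T\Phi_X \to B$ must be lifted along a surjective $\MT$-homomorphism $e\colon A \twoheadrightarrow B$ from some $A \in \V$. I would proceed as follows: projectivity of the free $\D$-algebra $\Phi_X$ (\Cref{rem:homtheorem}(c)) produces $g_0\colon \Phi_X \to A$ with $e \circ g_0 = h \circ \eta_{\Phi_X}$; by the universal property of $(T\Phi_X, \mu_{\Phi_X})$, $g_0$ extends to a $\MT$-homomorphism $g\colon T\Phi_X \to A$, and since $e \circ g$ and $h$ are $\MT$-homomorphisms agreeing on generators, $e \circ g = h$. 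Naturality then gives $h^+ = e \circ g^+$, and because $A$ satisfies the (in-)equation, $g^+(u) = g^+(v)$ (resp.\ $\leq$). Applying the $\MT$-homomorphism $e$—which is monotone in the ordered case—yields the required identity (resp.\ inequality) in $B$. No step poses a real obstacle; the only point demanding care is verifying the naturality identity above from the limit definition of $\hat{T}\widehat{\Phi_X}$.
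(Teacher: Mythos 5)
Your proposal is correct and follows essentially the same route as the paper's proof: reduce to a single (in-)equation, use the cone/compatibility property $e\circ \br{h} = \br{(e\circ h)}$ of the limit projections (which the paper uses implicitly), handle products via the jointly monic (jointly order-reflecting) projections, subalgebras via injectivity (order-reflection) of $m$, and quotients via projectivity of $\Phi_X$ followed by the free extension to a $\MT$-homomorphism. The only difference is presentational: you isolate the naturality identity as an explicit lemma, which the paper leaves implicit.
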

The proof is an easy verification. In the following subsection we show the 
converse of the lemma: every pseudovariety is presented by profinite equations.

\subsection{Reiterman's Theorem for $\MT$-algebras}\label{sec:pseudovargen}
The concept of profinite (in-)equation as introduced above only considers the
free finitely generated objects $\Phi_X$ of $\D$ as objects of variables. A
natural variation is to admit any finite object $X\in\D_f$ as an object of
variables. That is, we define a \emph{profinite equation over $X$} as a pair 
$u,v\in
\hatt\hat X$, and say that a finite $\MT$-algebra $(A,\alpha)$ \emph{satisfies}
$u=v$ if for every $\MT$-homomorphism $h\colon (TX,\mu_X)\to (A,\alpha)$
the $\hatT$-homomorphism $\br{h}\colon \hatt\hat X \ra A$ merges $u,v$;
analogously for inequations. A class of finite $\MT$-algebras presented by such
profinite equations is still closed under finite products and subalgebras, but
not necessarily under quotients. However, it is closed under \emph{$U$-split
  quotients} for the forgetful functor $U\colon \D^\MT_f \ra \D_f$, where a
surjective morphism $e$ in $\D^\MT_f$ is called \emph{$U$-split} if there is a 
morphism
$m$ in $\D_f$ with $Ue\o m = \id$.

More generally, we introduce below for a class $\X$ of objects in $\D$ the concept of
\emph{profinite (in-)equation} over $\X$: a pair of elements of $\hatt\hat X$
with $X\in\X$. This subsumes both of the above situations: by taking as $\X$
all free finitely generated objects of $\D$ we recover the concept of
\Cref{sec:pseudovar}. And the choice $\X=\D_f$ leads to a new variant of
Reiterman's theorem: a characterisation of classes of finite $\MT$-algebras
closed under finite products, subalgebras and $U$-split quotients. The latter
can be understood as a finite analogue of Barr's result~\cite{Barr2002}, which
states that classes of $\MT$-algebras closed under products, subalgebras and
$U$-split quotients are in bijective correspondence with quotient monads of
$\MT$.

\begin{notation}
For a class $\X$ of objects in $\D$ we denote by $\E_\X$ the class of all
surjective morphisms $e\colon A\epito B$ with finite codomain such that all objects
$X$ of $\X$ are projective w.r.t.~$e$. That is, every morphism $f\colon X\ra B$
factors through $e$.
\end{notation}

\begin{assumption}\label{asm:sec6}
We assume that a class $\X$ of objects in $\D$ is given that forms a
\emph{projective presentation} of $\D_f$, i.e.\ for every finite object 
$A\in\D_f$ there exists
an object $X\in\X$ and a quotient $e\colon X\epito A$ in $\E_\X$.
\end{assumption}

\begin{defn}
An \emph{$\X$-pseudovariety of $\MT$-algebras} is a class of finite 
$\MT$-algebras closed 
under finite products, subalgebras and $\E_\X$-quotients, i.e.\ quotients carried 
by a morphism in $\E_\X$.
\end{defn}

\begin{expl}\label{ex:quotients}
\begin{enumerate}[(a)]
  \item For the choice of \Cref{sec:pseudovar},
    \[
      \X =  \text{ free finitely generated objects of $\D$,}
    \]
    the class $\E_\X$ consists of all surjective morphisms with finite 
    codomain, see \Cref{rem:homtheorem}(c). Clearly \Cref{asm:sec6} is
    fulfilled since every finite object in a variety $\D$ is a quotient of a
    free finitely generated one. Thus an $\X$-pseudovariety is simply a 
    pseudovariety in the sense
    of \Cref{def:pseudovar}.
  \item If we choose
    \[
      \X = \D_f
    \]
    then $\E_\X$ consists precisely of the split surjections with finite
    codomain.  Indeed, clearly every split surjection lies in $\E_\X$.
    Conversely, given $e\colon A\epito B$ in $\E_\X$, apply the definition of
    $\E_\X$ to $X=B$ and $f=\id$. \Cref{asm:sec6} is fulfilled
    because every object in $\D_f$ is a split quotient of itself. A 
    $\D_f$-pseudovariety is a class of finite $\MT$-algebras closed under
    finite products, subalgebras and $U$-split quotients.
\end{enumerate}
\end{expl}

\begin{defn}
\begin{enumerate}
\item Let $\D$ be a variety of unordered algebras. A \emph{profinite equation over $\X$} is an expression 
of the form $u=v$ with $u,v\in \hatt \hatX$ and $X\in \X$. A finite 
$\MT$-algebra $(A,\alpha)$ \emph{satisfies} $u=v$  if
\[ \br{h}(u) = 
\br{h}(v) \quad\text{for all $\MT$-homomorphisms $h\colon TX\to A$.} \]
\item Let $\D$ be a variety of ordered algebras. A \emph{profinite inequation over $\X$} is an expression of 
the form $u\leq v$ 
with $u,v\in \hatt \hatX$ and $X\in \X$. A finite $\MT$-algebra 
$(A,\alpha)$ \emph{satisfies} $u\leq v$  if
\[ \br{h}(u) \leq
\br{h}(v) \quad\text{for all $\MT$-homomorphisms $h\colon TX\to A$.} \]
\end{enumerate}
A class $E$ of profinite (in-)equations over $\X$ \emph{presents} the class of all 
finite $\MT$-algebras that satisfy all (in-)equations in $E$.
\end{defn}

\begin{rem}\label{rem:phiv}
For any full subcategory $\V\seq \D^\MT_f$ closed under finite products and
subalgebras, the \emph{pro-$\V$ monad} of $\MT$ is the monad $\hatT_\V =
(\hatt_\V,\hatmu^\V, \hateta^\V)$ on $\hatD$ defined by replacing in
\Cref{def:profmon} the functor $U\colon \D^\MT_f \to \D_f$ by its restriction
$U_\V\colon \V\to \D_f$.  That is, $\hatt_\V$ is the right Kan extension of $\hatJ U_\V$ along itself. In analogy to \Cref{rem:hattconst}, one
can describe $\hatt_\V \hat{X}$ with $X\in\D$ as the cofiltered limit of the
diagram of all homomorphisms $h\colon (TX,\mu_X)\to (A,\alpha)$ with 
$(A,\alpha)\in \V$. The limit projections are denoted $\brv{h}\colon
\hatt_\V \hatX \ra A$.  The 
universal
property of $\hatt_\V$ as a right Kan extension yields a monad morphism
$\phi^\V\colon \hatT\ra \hatT_\V$; its component $\phi^\V_{\hatX}$ for
$X\in\DCat$ is the unique $\hatD$-morphism making the triangle below commute for all $h\colon (TX,\mu_X)\to (A,\alpha)$ 
with $(A,\alpha)\in \V$.
\begin{equation}\label{eq:phiv}
  \vcenter{
  \xymatrix{
  \hatt \hatX \ar@{->>}[r]^{\phi^\V_{\hatX}}\ar[d]_{\br{h}} & 
  \hatt_\V\hatX 
  \ar[dl]^{\brv{h}}\\
  A &
  }
  }
\end{equation}
\end{rem}

\begin{lemma}\label{lem:keylemma}
Let $\V$ be a class of finite $\MT$-algebras closed under finite products and 
subalgebras and $u,v\in \hatt\hatX$ with $X\in\D$.
\begin{enumerate}
\item Unordered case: $\phi^\V_\hatX 
(u) = \phi^\V_\hatX (v)$ iff every algebra in $\V$ satisfies 
$u=v$.
\item   Ordered case: $\phi^\V_\hatX 
(u) \leq \phi^\V_\hatX (v)$ iff every algebra in $\V$ satisfies $u\leq v$.
\end{enumerate}
\end{lemma}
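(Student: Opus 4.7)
The plan is to derive both directions from the commuting triangle \eqref{eq:phiv} together with the cofiltered limit description of $\hatt_\V \hatX$ recalled in \Cref{rem:phiv}. That is, $\hatt_\V \hatX$ is the limit in $\hatD$ of the diagram of all $\MT$-homomorphisms $h\colon (TX,\mu_X)\to (A,\alpha)$ with $(A,\alpha)\in\V$, whose limit projections are $\brv{h}\colon \hatt_\V \hatX \to A$, and \eqref{eq:phiv} yields the factorisation $\br{h} = \brv{h}\circ \phi^\V_\hatX$ for each such $h$.

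The forward direction of each case will be a direct substitution: assuming $\phi^\V_\hatX(u) = \phi^\V_\hatX(v)$ (respectively, $\leq$), one applies $\brv{h}$ to both sides and invokes the triangle to deduce $\br{h}(u) = \br{h}(v)$ (respectively, $\leq$) for every $\MT$-homomorphism $h\colon TX \to A$ with $(A,\alpha)\in\V$, showing that every algebra in $\V$ satisfies the (in-)equation. For the converse, one supposes each $(A,\alpha)\in\V$ satisfies $u=v$ (respectively $u\leq v$); by the same triangle, $\brv{h}(\phi^\V_\hatX(u))$ and $\brv{h}(\phi^\V_\hatX(v))$ then agree (respectively, are comparable) for every limit projection $\brv{h}$. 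The remaining task is to argue that this condition on projections actually determines the relation on $\phi^\V_\hatX(u)$ and $\phi^\V_\hatX(v)$ in $\hatt_\V\hatX$.

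In the unordered case this is the standard fact that the cone of a limit is jointly monic, so agreement on all projections forces equality in the limit. In the ordered case, one uses that the canonical comparison $\hatt_\V\hatX \monoto \prod_h A$ arising from the limit cone is a subalgebra in $\hatD$ (the limit is the equaliser of two maps between products of the diagram), hence by \Cref{lem:hatD-factorisation} order-reflecting; consequently $x \leq y$ in $\hatt_\V\hatX$ iff $\brv{h}(x) \leq \brv{h}(y)$ for all $h$, which finishes the argument. The main obstacle is precisely this last clause: verifying that the projections of a cofiltered limit in $\hatD$ (over a variety of ordered algebras) jointly \emph{reflect} the order, not merely equality. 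This is the only step beyond formal manipulation of the triangle, and it reduces to the componentwise-order description of cofiltered limits in $\hatD$, which in turn follows from the order-reflecting character of subalgebras recorded in \Cref{lem:hatD-factorisation}.
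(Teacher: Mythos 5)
Your proposal is correct and follows essentially the same route as the paper: both directions come from the commuting triangle \eqref{eq:phiv}, with the converse settled by the fact that the limit projections $\brv{h}$ are jointly monic in the unordered case and jointly order-reflecting in the ordered case. Your extra justification of the ordered case (via the limit-as-equaliser-of-products description and order-reflecting embeddings in $\hatD$) simply spells out what the paper asserts without proof, so there is no substantive difference.
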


\begin{theorem}[Reiterman's Theorem for $\MT$-algebras]\label{thm:reiterman}
A class of finite 
$\MT$-algebras is an $\X$-pseudovariety  iff it is presented
by profinite equations over $\X$ (unordered case) resp. profinite inequations 
over $\X$ (ordered case).
\end{theorem}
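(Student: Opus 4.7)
The plan is to prove both directions separately; the converse is the substantive content.

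\emph{Closure direction.} Let $\V$ be presented by profinite (in-)equations over $\X$. Closure under finite products and subalgebras adapts \Cref{lem:eqtovar} verbatim, using \Cref{prop:finite-algebras} to transfer $\hatT$-structure along product projections and (order-reflecting) subalgebra inclusions. For $\E_\X$-quotients $q\colon B \epito A$: any $\MT$-homomorphism $h\colon TX \to A$ with $X \in \X$ lifts to some $\MT$-homomorphism $h'\colon TX \to B$ with $q \circ h' = h$, using projectivity of $X$ with respect to $Uq \in \E_\X$ to lift $h\eta_X$ and freeness of $TX$ to extend the lift; then $\br{h} = q \circ \br{h'}$, and satisfaction in $B$ transfers to $A$.

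\emph{Completeness direction.} Let $\V$ be an $\X$-pseudovariety and let $\V'$ denote the class presented by all profinite (in-)equations over $\X$ satisfied by every algebra in $\V$. Trivially $\V \subseteq \V'$; fix $(A,\alpha) \in \V'$ and show $A \in \V$. By \Cref{asm:sec6} choose $X \in \X$ and a surjection $e_0\colon X \epito A$ in $\E_\X$, and let $e\colon TX \epito A$ be the induced $\MT$-homomorphism. By \Cref{lem:keylemma}, the assumption $A \in \V'$ amounts to $\phi^\V_{\hatX}(u) = \phi^\V_{\hatX}(v) \Rightarrow \br{e}(u) = \br{e}(v)$ (with $\leq$ in the ordered case) for all $u,v \in \hatt\hatX$. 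Surjectivity of $\phi^\V_{\hatX}$ (\Cref{rem:phiv}) together with the homomorphism theorem in $\hatD$ (\Cref{rem:homtheorem}(a), applicable since $\hatD$-objects are compact Hausdorff) then yield a morphism $\bar e\colon \hatt_\V\hatX \to A$ in $\hatD$ with $\bar e \circ \phi^\V_{\hatX} = \br{e}$.

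By the analogue of \Cref{prop:characterisation-hatT} for $\hatT_\V$ (valid because $\V$ is closed under subalgebras), $\hatt_\V\hatX$ is the cofiltered limit of all surjective $\MT$-homomorphisms $e_B'\colon TX \epito B$ with $B \in \V$, with limit projections $\brv{e_B'}$. Finiteness of $A$ and \Cref{re:hatD-is-procompletion}(ii) factor $\bar e = \bar e_B \circ \brv{e_B'}$ for some such $B$ and $\bar e_B\colon B \to A$ in $\D_f$; combining with the triangle of \Cref{rem:phiv} yields $\br{e} = \bar e_B \circ \br{e_B'}$ in $\hatD$. Precomposing with the canonical $\D$-morphism $\iota\colon TX \to V\hatt\hatX$ (determined by $V\br{h} \circ \iota = h$ for all $h$, available because the right adjoint $V$ preserves the defining cofiltered limit) upgrades this to the equation $\bar e_B \circ e_B' = e$ in $\D$; since $e_B'$ is a surjective $\MT$-homomorphism, the homomorphism theorem for $\MT$-algebras (\Cref{re:factorisation_DT}) promotes $\bar e_B$ to a $\MT$-homomorphism $B \to A$. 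Finally, composing with $\eta_X$ gives $\bar e_B \circ (e_B' \eta_X) = e_0$, and projectivity of any $Y \in \X$ with respect to $e_0$ transfers through $e_B' \eta_X$ to projectivity with respect to $\bar e_B$, so $\bar e_B \in \E_\X$. Thus $A$ is an $\E_\X$-quotient of $B \in \V$ in $\D^\MT_f$, and $A \in \V$ by closure.

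The chief technical hurdles are the $\hatT_\V$-analogue of \Cref{prop:characterisation-hatT} (needed to produce a surjective $\MT$-hom $e_B'$ witnessing the factoring of $\bar e$) and the introduction of the canonical map $\iota\colon TX \to V\hatt\hatX$ that transfers the identity $\bar e_B \circ \br{e_B'} = \br{e}$ from $\hatD$ to a $\D$-equation, at which point the ordinary homomorphism theorem for $\MT$-algebras takes over.
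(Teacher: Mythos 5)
Your proposal is correct, and its overall skeleton (choose $X\in\X$ and $e_0\colon X\epito A$ in $\E_\X$, extend to $e\colon TX\epito A$, factor $\br{e}$ through the surjection $\phi^\V_{\hatX}$ using \Cref{lem:keylemma}, then factor through the cofiltered limit defining $\hatt_\V\hatX$ to obtain some $(B,\beta)\in\V$ and a surjection $B\to A$, and finally check this surjection lies in $\E_\X$) coincides with the paper's. Where you genuinely diverge is the categorical level at which the two factorisations are performed. The paper equips $\hatt_\V\hatX$ with a $\hatT$-algebra structure via the monad morphism $\phi^\V$, obtains $g$ with $g\circ\phi^\V_{\hatX}=\br{e}$ as a \emph{$\hatT$-homomorphism} (homomorphism theorem of \Cref{rem:homtheoremhatt}), factors it through the limit cone inside $\hatD^{\hatT}$ using \Cref{lem:hattcoflimsurj}(b) (finite $\hatT$-algebras are finitely copresentable there, a nontrivial fact), and then invokes the isomorphism $\D^\MT_f\cong\hatD^{\hatT}_f$ of \Cref{prop:finite-algebras} to see that the resulting $q\colon B\to A$ is a $\MT$-homomorphism. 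You instead stay in $\hatD$: you use only the plain homomorphism theorem for compact Hausdorff $\D$-algebras and the finite copresentability of finite objects in $\hatD$ (\Cref{re:hatD-is-procompletion}(ii)) to factor $\bar e=\bar e_B\circ\brv{e_B'}$, and then recover the $\MT$-homomorphism property of $\bar e_B$ by precomposing the $\hatD$-identity $\bar e_B\circ\br{e_B'}=\br{e}$ with the canonical mediating map $\iota\colon TX\to V\hatt\hatX$ (which exists exactly as you say, since $V$ is a right adjoint and the $\MT$-homomorphisms $h$ form a compatible cone) and applying the ordinary homomorphism theorem of \Cref{re:factorisation_DT} together with surjectivity of $e_B'$; your verification that $\bar e_B\in\E_\X$ is likewise carried out directly in $\D$ via $e_0=\bar e_B\circ e_B'\eta_X$, whereas the paper argues with hats in $\hatD$. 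Your route buys a more elementary proof of the theorem itself, bypassing \Cref{lem:hattcoflimsurj}(b), \Cref{prop:finite-algebras} and the monad-morphism structure of $\phi^\V$ at the cost of the extra map $\iota$; the paper's route keeps the factorisation morphisms algebraic ($\hatT$-homomorphic) throughout, reusing machinery it develops anyway in \Cref{sec:profinite-monad}. Two small points to tidy up: note explicitly that $\bar e_B$ is surjective (immediate from $e=\bar e_B\circ e_B'$), which the definition of $\E_\X$ requires, and that surjectivity of $\phi^\V_{\hatX}$ is not stated in \Cref{rem:phiv} itself but follows, as in the paper's step (a), from restricting the defining diagram to surjective homomorphisms (legitimate since $\V$ is closed under subalgebras) and \Cref{lem:surjections-between-cofiltered-diagrams}.
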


\begin{proof}
Consider first the unordered case. The ``if'' direction is a straightforward 
verification. For the ``only if'' direction let $\V$ be an
$\X$-pseudovariety.

\begin{enumerate}[(a)]
  \item In analogy to \Cref{prop:characterisation-hatT} one can restrict the
    cofiltered diagram defining $\hatt_\V \hatX$ to 
  surjective homomorphisms $h\colon TX\epito A$. Then the limit projections 
  $h_\V^+$ and the mediating map 
    $\phi^\V_{\hatX}$ in \eqref{eq:phiv} are also surjective,
    see~\cite[Corollary 1.1.6]{Ribes2010}. Moreover, since
    $\phi^\V$ is a monad morphism, the free $\hatT_\V$-algebra ($\hatt_\V \hatX,
    \hatmu^\V_{\hat{X}})$ on $\hatX$ can be turned into a $\hatT$-algebra
    $(\hatt_\V \hatX, \hatmu^\V_{\hatX}\o \phi^\V_{\hatt_\V\hatX})$, and
    $\phi^\V_{\hatX}\colon (\hatt \hatX,\hatmu_{\hatX}) \ra (\hatt_\V \hatX,
    \hatmu^\V_{\hatX} \o \phi^\V_{\hatt_\V \hatX})$ is a $\hatT$-homomorphism.
  \item Let $E$ the class of all profinite equations over $\X$ satisfied by all
    algebras in $\V$. We prove that $\V$ is presented by $E$, which only requires to show
    that every finite $\MT$-algebra $(A,\alpha)$ satisfying all equations in $E$
    lies in $\V$.

    By \Cref{asm:sec6} choose $X\in \X$ and a quotient $e_0\colon X\epito A$ 
    in $\E_\X$, and freely extend $e_0$ to a (necessarily
    surjective) $\MT$-homomorphism $e\colon TX\epito A$. We first show that the
     corresponding $\hatT$-homomorphism $\br{e}: \hatt \hatX \to \hat 
    A$ 
    factors through $\phi^\V_\hatX$. Indeed, whenever
    $\phi^\V_\hatX$ merges $u,v\in \hatt\hatX$ then the profinite equation $u=v$ lies in $E$ by
    \Cref{lem:keylemma}, so $\br{e}$ merges $u,v$ since
    $(A,\alpha)$ satisfies all equations in $E$. Since $\phi^\V_\hatX$ is
    surjective by (a), the homomorphism theorem (see \Cref{rem:homtheoremhatt})
    yields a $\hatT$-homomorphism $g\colon \hatt_\V\hatX\to A$ in $\hatD$ with 
    $g\o
    \phi^\V_\hatX = \br{e}$. 
	\item By \Cref{lem:hattcoflimsurj}(b) the $\hatT$-homomorphism $g$
    factors through the limit cone defining $\hatT_\V \hatX$: there is a
    $\MT$-homomorphism $h\colon TX\to B$ with $(B,\beta)\in\V$ and a
    $\hatT$-homomorphism $q\colon B\to  A$ with $q\o \brv{h} = g$.
    By \Cref{prop:finite-algebras} the morphism $q$ is also a 
    $\MT$-homomorphism, and is surjective
    because $g$ is. 
  \item To conclude the proof it suffices to verify that $q$ lies in $\E_\X$ (then $(B,\beta)\in\V$ implies $(A,\alpha)\in\V$ because $\V$ is closed under
    $\E_\X$-quotients). Indeed: every morphism $f\colon Y\to A$ with $Y\in\X$
    factors through $e_0$ because $e_0\in \E_\X$, i.e. \[ f = e_0\o
      k\quad\text{for some $k\colon Y\to X$ in $\D$.} \] Then the diagram below
    commutes (for the second triangle see \eqref{eq:hat-eta}) and
    shows that $\hat f$ factors through $\hat q = q$ in $\hatD$, so $f$ factors
    through $q$ in $\D$. We conclude that $q\in\E_\X$, as desired.
    \[
    \xymatrix{
    \hat Y \ar[dr]_{\hat f} \ar[r]^{\hat k} & \hatX 
    \ar@{->>}[d]_>>>>>>{\hat{e_0}} \ar[r]^{\hat\eta_{\hatX}}  & \hatt \hatX 
    \ar@{->>}[r]^{\phi_{\hatX}^\V} \ar@{->>}[dl]|{\br{e}}  & \hatt_\V\hatX 
    \ar@{->>}@/^1em/[dll]_g \ar[d]^{\brv{h}} \\
    & \hat A & & \hat B \ar@{->>}[ll]^{\hat q}
    }
    \]
\end{enumerate}
This proves the theorem for the unordered case. The proof for the ordered case
is analogous: replace profinite equations by inequations, and use the
homomorphism theorem for ordered algebras to construct the morphism~$g$.
\end{proof}

\section{Applications and Examples}

Let us consider some examples and applications. First note that the original Reiterman theorem and its
ordered version emerge from \Cref{thm:reiterman} by taking the identity monad
$\MT=\Id$ and $\X =$ free finitely generated objects of $\D$, see
\Cref{ex:quotients}(a). In this case we have $\hatT = \Id$, $\D^\MT = \D$,
$\hatD^{\hatT} = \hatD$, and a profinite equation $u=v$ (resp. a profinite
inequation $u\leq v$) is a pair $u,v\in \hat{\Phi_X}$ for a finite set $X$. We
conclude:

\begin{corollary}[Reiterman \cite{Reiterman1982}, Banaschewski \cite{Banaschewski1983}]
Let $\D$ be a variety of algebras. A class $\V\seq\D_f$ is a pseudovariety iff 
it is 
presented by profinite equations over finite sets of 
variables.
\end{corollary}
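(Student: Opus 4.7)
The plan is to derive the corollary as a direct instance of \Cref{thm:reiterman} by choosing $\MT = \Id$ and taking $\X$ to be the class of free finitely generated $\D$-algebras $\Phi_Y$ for $Y$ a finite set. Under these choices, I would argue that every piece of the general framework collapses to its classical counterpart, after which the corollary becomes a literal rephrasing of the main theorem.

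Concretely, I would first verify the needed identifications. By \Cref{ex:quotients}(a), under this choice of $\X$ the class $\E_\X$ consists of all surjective morphisms with finite codomain, \Cref{asm:sec6} is satisfied, and an $\X$-pseudovariety of $\MT$-algebras is precisely a pseudovariety in the classical sense, since $\D^\MT = \D$ and $\D_f^\MT = \D_f$ for $\MT = \Id$. Next I would check that the profinite monad reduces to the identity: the functor $K = \hatJ U$ from \Cref{def:profmon} is just the inclusion $\hatJ\colon \D_f \hookrightarrow \hatD$, and its right Kan extension along itself sends $X\in \hatD$ to the cofiltered limit of the diagram $(X\downarrow \hatJ)\to \hatD$, $(h\colon X\to A)\mapsto A$, which by \Cref{re:hatD-is-procompletion}(i) is exactly $X$ with the canonical projections. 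Hence $\hatt = \Id_{\hatD}$ and $\hatD^{\hatT} = \hatD$. In particular $\hatt \widehat{\Phi_Y} = \widehat{\Phi_Y}$, so a profinite equation over $\Phi_Y \in \X$ is a pair $u,v \in \widehat{\Phi_Y}$, recovering the classical notion of a profinite equation over the finite set $Y$ of variables.

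With these identifications the corollary is precisely the statement of \Cref{thm:reiterman} specialised to the present $\MT$ and $\X$. There is no genuine obstacle in the argument; it is a bookkeeping exercise translating the categorical notions into the classical ones. The only mildly substantive point is the verification that the codensity monad of $\hatJ$ is the identity, which rests entirely on the pro-completion characterisation of $\hatD$ recorded in \Cref{re:hatD-is-procompletion}.
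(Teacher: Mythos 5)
Your proposal is correct and follows essentially the same route as the paper: the authors likewise obtain the corollary by instantiating \Cref{thm:reiterman} with $\MT=\Id$ and $\X$ the free finitely generated objects, noting via \Cref{ex:quotients}(a) that $\X$-pseudovarieties are classical pseudovarieties and that $\hatT=\Id$, so profinite equations over $\X$ are pairs in $\hat{\Phi_X}$ for finite $X$. Your explicit verification that the codensity monad of $\hatJ$ is the identity (via \Cref{re:hatD-is-procompletion}(i)) just spells out a detail the paper states without proof.
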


\begin{corollary}[Pin and Weil \cite{PinWeil1996}]
Let $\D$ be a variety of ordered algebras. A class $\V\seq\D_f$ is a 
pseudovariety iff it is presented 
by profinite inequations over finite sets of 
variables.
\end{corollary}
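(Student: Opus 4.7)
The plan is to derive this as a direct instantiation of \Cref{thm:reiterman} in the ordered case, following exactly the same recipe used for the preceding unordered corollary. I would take $\MT = \Id$ to be the identity monad on the variety $\D$ of ordered algebras, and pick $\X$ to be the class of free finitely generated $\D$-algebras, as in \Cref{ex:quotients}(a).

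First I would observe that all the hypotheses of \Cref{thm:reiterman} are trivially met: the identity monad preserves surjective morphisms, so \Cref{asm:sec3} holds; and \Cref{asm:sec6} holds by \Cref{ex:quotients}(a), since every finite algebra is a quotient of some free finitely generated algebra via a surjection in $\E_\X$. Next I would unfold what the general theorem says in this case. Since $\MT = \Id$, the category $\D^\MT$ is just $\D$, the codensity monad $\hatT$ is the identity on $\hatD$ (as the functor $\hatJ U = \hatJ$ is fully faithful, so its right Kan extension along itself is the identity), and $\hatD^{\hatT} = \hatD$. Consequently $\hatt\hat{\Phi_X} = \hat{\Phi_X}$, and a profinite inequation over $\X$ reduces to a pair $u \leq v$ with $u,v \in \hat{\Phi_X}$ for some finite set $X$ of variables, exactly as in the statement. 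The class $\E_\X$ consists of all surjective morphisms with finite codomain by \Cref{ex:quotients}(a), so closure under $\E_\X$-quotients is just closure under quotients, and an $\X$-pseudovariety coincides with a pseudovariety in the classical sense.

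With these translations in place, the biconditional ``$\V \subseteq \D_f$ is a pseudovariety iff $\V$ is presented by profinite inequations over finite sets of variables'' is literally the ordered case of \Cref{thm:reiterman}. The only step requiring any real care is verifying that $\hatT = \Id$ and hence that profinite inequations in the sense of \Cref{sec:pseudovar} agree with pairs of elements of $\hat{\Phi_X}$; this is immediate from the limit formula in \Cref{rem:hattconst}(a) applied to $\MT = \Id$, where the comma category $(X \downarrow \hatJ)$ has $X$ itself as a terminal object whenever $X \in \hatD_f$, forcing the limit to be $X$. I do not foresee a genuine obstacle here: the argument is purely a bookkeeping exercise of specialising the general notation, and the satisfaction relation $\br{h}(u) \leq \br{h}(v)$ likewise reduces to $\hat h(u) \leq \hat h(v)$ for $\D$-morphisms $h \colon \Phi_X \to A$, which is Pin and Weil's original formulation.
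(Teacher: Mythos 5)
Your proposal is correct and follows essentially the same route as the paper: instantiate \Cref{thm:reiterman} (ordered case) with $\MT=\Id$ and $\X=$ free finitely generated algebras, so that $\hatT=\Id$, $\hatD^{\hatT}=\hatD$, and profinite inequations over $\X$ become pairs $u\leq v$ in $\hat{\Phi_X}$. One small repair to your justification of $\hatT=\Id$: full faithfulness of $\hatJ$ alone does not make the right Kan extension of $\hatJ$ along itself the identity, and your terminal-object argument only covers finite $X$, whereas $\hat{\Phi_X}$ is typically not finite; the correct (and immediate) reason is \Cref{re:hatD-is-procompletion}(i), which says every $X\in\hatD$ is the cofiltered limit of its canonical diagram over $(X\downarrow\D_f)$, i.e.\ $\hatJ$ is codense, giving $\hatt X\cong X$ for all $X\in\hatD$.
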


Recall from Isbell~\cite{isbell64} that a class $\V\seq
\D$ is closed under products and subalgebras iff it
is presented by implications 
\[
  \bigwedge_{i\in I} s_i=t_i ~\Ra~ s=t
\]
where $s_i$, $t_i$, $s$, $t$ are terms and $I$ is a set.  Choosing $\MT$ to be
the identity monad and $\X=\D_f$ gives us the counterpart for finite algebras:
by \Cref{ex:quotients}(b) a $\D_f$-pseudovariety is precisely a class
$\V\seq\D_f$ closed under finite products and subalgebras, since the closure
under split quotients is implied by closure under subalgebras.  Such a class
could be called ``quasi-pseudovariety'', but to avoid this clumsy terminology we
prefer ``quasivariety of finite algebras''.

\begin{defn}
A \emph{quasivariety of finite algebras of $\D$} is a class $\V\seq\D_f$ closed under finite products and subalgebras.
\end{defn}

In analogy to Isbell's result we show that quasivarieties of finite algebras 
are precisely the classes of finite algebras 
of $\D$ presented by \emph{profinite implications}.

\begin{defn}
Let $X$ be a finite set of variables.
\begin{enumerate}
\item Unordered case: a \emph{profinite implication} over $X$ is an expression 
\begin{equation}\label{eq:profimpunord}
\bigwedge_{i\in I} u_i=v_i ~\Ra~ u=v
\end{equation}
 where $I$ is a set and 
$u_i, v_i, u, v\in  \hat{\Phi_X}$. An object $A\in\D_f$ 
\emph{satisfies} \eqref{eq:profimpunord} if for every $h\colon \Phi_X\ra A$ 
with 
$\hat{h}(u_i)=\hat{h}(v_i)$ for all $i\in I$ one has 
$\hat{h}(u)=\hat{h}(v)$.
\item Ordered case: a \emph{profinite implication} over $X$ is an expression
\begin{equation}\label{eq:profimpord}
\bigwedge_{i\in I} u_i\leq v_i ~\Ra~ u\leq v
\end{equation}
 where $I$ is a set and 
$u_i, v_i, u, v\in  \hat{\Phi_X}$. An object $A\in\D_f$ 
\emph{satisfies} \eqref{eq:profimpord} if for every $h\colon \Phi_X\ra A$ with 
$\hat{h}(u_i)=\hat{h}(v_i)$ for all $i\in I$ one has 
$\hat{h}(u)\leq \hat{h}(v)$.
\end{enumerate}
A class $P$ of profinite implications \emph{presents} the class of all finite algebras in $\D$ satisfying all implications in $P$.
\end{defn}

\begin{theorem}\label{cor:reitermanquasi}
For any class $\V\seq \D_f$ the following statements are equivalent:
\begin{enumerate}
\item $\V$ is a quasivariety of finite algebras.
\item $\V$ is presented by profinite (in-)equations over $\D_f$.
\item $\V$ is presented by profinite implications.
\end{enumerate}
\end{theorem}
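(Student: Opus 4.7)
My plan is to establish the cycle $(1) \Leftrightarrow (2) \Rightarrow (3) \Rightarrow (1)$. The first equivalence is an immediate instance of \Cref{thm:reiterman}, the implication $(3) \Rightarrow (1)$ is a routine verification, and the real work will lie in $(1) \Rightarrow (3)$, which I will handle by translating profinite equations over $\D_f$ into profinite implications over finite sets of variables.

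For $(1) \Leftrightarrow (2)$, I apply \Cref{thm:reiterman} with $\MT = \Id$ and $\X = \D_f$. By \Cref{ex:quotients}(b), a $\D_f$-pseudovariety is a class of finite $\D$-algebras closed under finite products, subalgebras, and split quotients, and closure under split surjections is automatic from closure under subalgebras: any splitting $m$ of a split surjection $e\colon B \epito A$ is (order-)reflecting (since $e \circ m = \id$), so $A$ is isomorphic to a subalgebra of $B$. Thus $\D_f$-pseudovarieties coincide with quasivarieties of finite algebras. For $(3) \Rightarrow (1)$, I use that finite products of finite algebras are finite, that $\widehat{A_1 \times A_2} = A_1 \times A_2$, and that $\widehat{\pi_i} = \pi_i$: then any $h\colon \Phi_X \to A_1 \times A_2$ verifying the hypothesis of an implication projects to morphisms $\pi_i h$ doing so in each factor; and for a subalgebra $m\colon A \rightarrowtail B$ with $B$ satisfying the implication, $m \circ h$ verifies the hypothesis in $B$, whence (order-)reflectedness of $m$ transfers the conclusion back to $A$.

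The heart of the proof is $(1) \Rightarrow (3)$, which I carry out via $(2)$. Given a quasivariety $\V$, pick by $(2)$ a presenting set $E$ of profinite (in-)equations over $\D_f$. Each such (in-)equation has the form $u = v$ (resp.\ $u \leq v$) with $u, v \in \hatt \hat{X}$; since $\MT = \Id$ and $\hat{X} = X$ for $X \in \D_f$, this simplifies to $u, v \in X$. To each such equation I associate a finite set $Y$ and a surjective $\D$-morphism $e_0\colon \Phi_Y \epito X$ (which exists because every finite algebra in $\D$ is a quotient of a free finitely generated algebra), choose preimages $\bar{u}, \bar{v} \in \Phi_Y$ of $u, v$, and form the profinite implication
\[
  \bigwedge_{(s, t) \in K} s = t \;\Ra\; \bar{u} = \bar{v}, \qquad K = \set{(s, t) \in \Phi_Y \times \Phi_Y}{e_0(s) = e_0(t)},
\]
with all elements regarded as lying in $\hat{\Phi_Y}$ via the unit of $\widehat{\cdot} \dashv V$ (\Cref{prop:profcomp}). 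Correctness rests on the observation that the morphisms $h\colon \Phi_Y \to A$ verifying the hypothesis are precisely those factoring as $h = h_0 \circ e_0$ for some $h_0\colon X \to A$, and for such $h$ one has $\hat{h}(\bar{u}) = h_0(u)$ and $\hat{h}(\bar{v}) = h_0(v)$, so the implication's conclusion reproduces the original equation evaluated at $h_0$. The ordered case is parallel, using $K = \set{(s,t)}{e_0(s) \leq e_0(t)}$ and inequations throughout.

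The main obstacle will be the bookkeeping across two distinct ``profinite'' indexings: equations in (2) live in $\hatt \hat{X} = X$ for a finite \emph{algebra} $X$, while implications in (3) live in $\hat{\Phi_Y}$ for a finite \emph{set} $Y$. The bridge $e_0\colon \Phi_Y \epito X$, together with projectivity of free algebras (\Cref{rem:homtheorem}(c)), should make the translation well-defined, and once the equivalence of satisfactions is checked the rest is purely formal.
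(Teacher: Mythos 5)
Your proposal is correct, and it follows the same overall architecture as the paper: the cycle with $3\Ra 1$ a routine closure check, $1\Ra 2$ via \Cref{thm:reiterman} with $\MT=\Id$ and $\X=\D_f$ (including the same observation that split quotients are subsumed by subalgebras), and the real content in $2\Ra 3$ via a surjection $e_0\colon \Phi_Y\epito X$ from a free algebra on a finite set, kernel pairs as premises, and preimages of $u,v$ as the conclusion. The one genuine difference is in which kernel you use: the paper takes as premises \emph{all} pairs $(u_i,v_i)\in \hat{\Phi_Y}\times\hat{\Phi_Y}$ merged by $\hat{q}\colon\hat{\Phi_Y}\epito X$, and in the ``only if'' direction factors $\hat h$ through $\hat q$ using the homomorphism theorem for compact Hausdorff topological algebras (\Cref{rem:homtheorem}(a), second clause); you instead take only the algebraic kernel of $e_0$ itself, embedded into $\hat{\Phi_Y}$ via the unit of the adjunction of \Cref{prop:profcomp}, and factor $h$ through $e_0$ by the plain algebraic homomorphism theorem. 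Your variant is sound: since $V\hat h\circ\eta_{\Phi_Y}=h$, the hypothesis of your implication says exactly that $h$ merges (resp.\ order-relates) everything $e_0$ does, which by \Cref{rem:homtheorem}(a) characterises the morphisms of the form $h_0\circ e_0$, and every $h_0\colon X\to A$ arises this way; so satisfaction of your implication is equivalent to satisfaction of the (in-)equation $u=v$ resp.\ $u\leq v$ over $X$. What your choice buys is a more elementary verification (no topological ingredient, and premises that are ordinary terms of $\Phi_Y$ rather than genuinely profinite elements); what the paper's choice buys is that the premises directly encode the kernel of the profinite quotient $\hat q$, so the factorisation happens at the level of $\hat h$. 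One small point to make explicit when writing this up: in the ordered case your premises must be read as inequalities $h(s)\leq h(t)$ (as your choice of $K$ indicates), matching the paper's intended notion of satisfaction for \eqref{eq:profimpord} and the ordered homomorphism theorem.
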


\renewcommand*{\proofname}{Proof sketch}
\begin{proof}
3$\Ra$1 requires a routine verification, and 1$\Ra$2 is \Cref{thm:reiterman}. For 2$\Ra$3 
assume w.l.o.g. that $\V$ is presented by a 
single profinite equation $u=v$ with $u,v$ elements of some $X\in\D_f$. 
Express $X$ as a quotient $q\colon \Phi_Y\epito X$  for some finite set $Y$. 
Let 
$\{\,(u_i,v_i) : i \in I\,\}$ be the kernel of $\hat q\colon  \hat{\Phi_Y} 
\epito 
X$ (consisting of all pairs $(u_i,v_i)\in \hat{\Phi_Y}\times\hat{\Phi_Y}$ with 
$\hat q (u_i) = \hat q (v_i)$), and choose $u',v'\in \hat{\Phi_Y}$ with $\hat 
q(u') = u$ and  $\hat 
q(v') = v$.  Then a finite object $A\in\D_f$ satisfies the 
profinite equation $u=v$ iff it satisfies the profinite implication 
\begin{equation}\label{eq:imp}
\bigwedge_{i\in I} u_i=v_i ~\Ra~ u'=v',
\end{equation}
which proves that $\V$ is presented \eqref{eq:imp}. Analogously for the ordered case.
\end{proof}
\renewcommand*{\proofname}{Proof}

\begin{expl}
\begin{enumerate}
\item Let $\V\seq \mathbf{Mon}_f$ be the quasivariety of all finite monoids
  whose only invertible element is the unit. It is presented by the profinite
  implication $x^\omega = 1 \Ra x=1$ over the set of variables $X=\{x\}$. Here
  the profinite word $x^\omega\in \widehat{X^*}$ is interpreted, for every 
  finite monoid $M$ with $x$ interpreted as $m\in M$, as the unique idempotent 
  power of $m$. Indeed, if $M$ has no nontrivial invertible elements, it 
  satisfies the implication: given $m\neq 1$ and $m^k$ idempotent, then 
  $m^k\neq 1$ (otherwise $m$ has the inverse $m^{k-1}$). Conversely, if $M$ 
  satisfies the implication and $m$ is invertible, then so is its idempotent 
  power 
  $m^k$. Hence $m^k\o m^k = m^k$ implies $m^k=1$, so
  $m=1$.
\item Let $\Pos$ be the variety of posets (i.e.\ the variety of all ordered 
algebras over the empty signature). The quasivariety $\V\seq \Pos_f$ of 
finite 
discrete 
posets is presented by the profinite implication $v\leq u \Ra u\leq 
v$ over the set $X=\{u,v\}$.
\end{enumerate}
\end{expl}

\begin{rem}\label{rem:many-sorted}
As indicated before all concepts in this paper also apply to a setting where 
$\D$ is a many-sorted variety of algebras or ordered algebras.  In this case 
an algebra is \emph{finite} if the disjoint union of the underlying sets of 
all sorts is a 
finite set. By a 
\emph{profinite equation} over $X\in \D$ is a meant pair of elements $u,v$ in 
some 
sort $s$ 
of 
$\hatt \hatX$, and it is \emph{satisfied} by a finite $\MT$-algebra $A$ if for 
every 
$\MT$-homomorphism $h\colon TX\to A$ the $s$-component of $h^+\colon \hatt\hatX\to A$ 
merges $u,v$. Similarly for 
profinite inequations and profinite implications. 
\end{rem}

\begin{expl}\label{ex:jones}
Consider the variety $\D$ of directed graphs, i.e. algebras for the two-sorted 
signature consisting of a sort $\mathsf{Ob}$ (objects), a sort $\mathsf{Mor}$ 
(morphisms) and two 
unary operations $s,t\colon \mathsf{Mor}\to \mathsf{Ob}$ specifying the 
source and target of a 
morphism. 
Then $\mathbf{Cat}$, the category of small categories and functors, is 
isomorphic to $\D^\MT$ 
for the 
monad $\MT$ constructing the free category on a 
graph. Choosing $\X = $ free finitely generated graphs, \Cref{thm:reiterman} shows that every
pseudovariety of categories, i.e.\ every class of finite 
categories closed under finite products, subcategories (represented by 
injective functors) and quotient 
categories (represented by surjective functors), can be specified by profinite 
equations over a two-sorted set of variables. This result was essentially 
proved by Jones~\cite{Jones1996}.
The difference is that he restricts to quotients represented 
by  
surjective functors which are bijective on objects, and replaces subcategories 
by faithful functors. Moreover, 
profinite equations are restricted to the sort of morphisms.
\end{expl}

\section{Conclusions and Future Work}
Motivated by recent developments in algebraic language theory, we generalised 
Reiterman's theorem to finite algebras for an arbitrary monad $\MT$ on a base 
category $\D$. Here $\D$ is a variety of (possibly ordered, many-sorted) 
algebras. The core concept of our paper is the profinite monad $\hatT$ of 
$\MT$, which makes it possible to introduce profinite (in-)equations at the 
level of monads and prove that they precisely present pseudovarieties of 
$\MT$-algebras.

Referring to the diagram in the Introduction, our Reiterman theorem is presented
in a setting that unifies the two categorical approaches to algebraic language
theory of Boja\'nzcyk \cite{Bojanczyk2015} and in our work
\cite{Adamek2014c,Adamek2015,Adamek2015b, Chen2015}. The next step is to also
derive an Eilenberg theorem in this setting. For each monad $\MT$ on a category
of sorted sets, Boja\'nczyk~\cite{Bojanczyk2015} proved an Eilenberg-type
characterisation of pseudovarieties of $\MT$-algebras: they correspond to
\emph{varieties of $\MT$-recognisable languages}. Here by a ``language'' is
meant a function from $TX$ to $\{0,1\}$ for some alphabet $X$, and a variety of
languages is a class of such languages closed under boolean operations,
homomorphic preimages and a suitably generalised notion of derivatives. On the
other hand, as indicated in \Cref{ex:monads}, one needs to consider
monoids on algebraic categories beyond $\Set$ in order to study varieties of
languages with relaxed closure properties, e.g. dropping closure under
complement or intersection. The aim is thus to prove an Eilenberg theorem
parametric in a monad $\MT$ on an algebraic category $\D$. Observing that e.g.
for $\D=\Set$ the monad $\hatT$ on $\Stone$ dualises to a \emph{comonad} on the
category of boolean algebras, we expect this can be achieved in a duality-based
setting along the lines of Gehrke, Grigorieff and Pin~\cite{Gehrke2008} and our
work~\cite{Adamek2014c,Adamek2015}.

 Throughout this paper we presented the case of ordered and unordered algebras as 
separated but analogous developments. Pin and Weil~\cite{PinWeil1996} gave a 
uniform treatment of ordered and unordered algebras by generalising  
 Reiterman's theorem from finite algebras to finite first-order 
structures. A similar  approach should also work in our 
categorical framework: replace $\D$ by a variety of relational algebras over a 
quasivariety $\Q$ of relational first-order structures, with $\Q=\Set$ and 
$\Q=\Pos$ covering the case of algebras and ordered algebras.

Finally, observe that categories of the form $\D^\MT$, where $\D$ is a many-sorted variety of algebras and $\MT$ is an accessible monad, correspond precisely to locally 
presentable categories. This opens the door towards an abstract treatment, and 
further generalisation, of Reiterman's theorem in purely categorical 
terms.


\bibliographystyle{splncs03}
\bibliography{reference}

\appendix
\section{Topological toolkit}
The proofs of the following lemmas can be found in Chapter~1 of~\cite{Ribes2010}.
\begin{lemma}
  \label{lem:surjections-between-cofiltered-diagrams}
  Let $\tau\colon D_1 \to D_2$ be a natural transformation between cofiltered
  diagrams (of the same index) in the category of compact Hausdorff
  spaces. If each $\tau_i\colon D_1i \twoheadrightarrow D_2i$ is surjective, 
  so is the mediating map
  \[
    \Lim \tau \colon \Lim D_1 \to \Lim D_2.
  \]
  In particular, if $\tau_i\colon X \twoheadrightarrow Di$ is a cone of
  surjections, then the mediating map $h = \lim\tau$ is surjective.
\end{lemma}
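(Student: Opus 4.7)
The plan is to reduce surjectivity of $\Lim\tau$ to the classical fact that a cofiltered limit of nonempty closed subspaces of compact Hausdorff spaces is nonempty. Write $L_k \defeq \Lim D_k$ with projections $\pi_i^k\colon L_k \ra D_k i$ for $k=1,2$, and set $h \defeq \Lim\tau$. Given $y\in L_2$ with components $y_i \defeq \pi_i^2(y)$, the goal is to exhibit an $x\in L_1$ with $h(x)=y$.

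First I would form the fibres $Z_i \defeq \tau_i^{-1}(y_i) \seq D_1 i$. Each $Z_i$ is nonempty by surjectivity of $\tau_i$, and closed in the compact Hausdorff space $D_1 i$ by continuity of $\tau_i$ together with Hausdorffness of $D_2 i$. Next I would check that $(Z_i)_i$ inherits a cofiltered subdiagram structure from $D_1$: for any morphism $f\colon i\ra j$ in the index category, naturality of $\tau$ gives $\tau_j\o D_1(f) = D_2(f)\o \tau_i$, and the cone condition $D_2(f)(y_i)=y_j$ (coming from $y\in L_2$) shows that $D_1(f)$ restricts to a continuous map $Z_i\ra Z_j$, with functoriality inherited from $D_1$.

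The key step is then to invoke the classical nonemptiness result for the cofiltered limit $\Lim Z$. Concretely, $\Lim Z$ sits inside the compact product $\prod_i Z_i$ (compact by Tychonoff) as the intersection of the closed subsets cut out by the cone relations $D_1(f)(x_i)=x_j$; cofilteredness of the index category ensures this family has the finite intersection property, so $\Lim Z \neq \emptyset$. Any $x\in\Lim Z$ then determines an element of $L_1$ satisfying $\tau_i(\pi_i^1(x)) = y_i$ for every $i$, hence $h(x)=y$, which proves surjectivity.

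For the ``in particular'' clause, I would take $D_1$ to be the constant diagram at $X$ (whose cofiltered limit is $X$, since any cofiltered category is nonempty and connected) and view the cone $(\tau_i)$ as a natural transformation from this constant diagram to $D$; the general statement then immediately specialises. The part I expect to require the most care is the finite intersection property argument above: this is where cofilteredness of the index category (rather than mere smallness) is genuinely used, and is the heart of the profinite toolkit.
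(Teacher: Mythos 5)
Your argument is correct and is essentially the proof the paper relies on: the paper itself gives no proof but refers to Chapter~1 of Ribes--Zalesskii, where exactly this fibre argument appears — form the nonempty closed fibres $\tau_i^{-1}(y_i)$, observe they constitute a cofiltered subdiagram, and use compactness plus the finite intersection property (i.e.\ nonemptiness of cofiltered limits of nonempty compact Hausdorff spaces) to find a preimage, with the ``in particular'' clause obtained from the constant diagram at $X$. No gaps worth flagging; the finite-intersection step you single out is indeed where cofilteredness is used, via a cone over the finite subdiagram spanned by the finitely many relations.
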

\begin{lemma}
  \label{lem:dense-lemma}
  Let $(X \xrightarrow{\rho_i} X_i)_i$ be a limit of a cofiltered diagram
  in the category of topological spaces, and $(Y \xrightarrow{f_i} X_i)_i$ a 
  cone of that diagram consisting
  of surjective
  continuous functions. Then, the mediating map
  \[
    f\colon Y \to X 
  \]
  is dense, i.e.\ the image~$f[Y]$ is a dense subset of~$X$.
\end{lemma}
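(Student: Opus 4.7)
The plan is to work directly with the standard concrete description of a cofiltered limit in $\Top$: the space $X$ sits inside the product $\prod_i X_i$ as the subspace consisting of threads, with the projections $\rho_i$ being restrictions of the product projections. To prove density, I will show that every non-empty basic open subset of $X$ meets $f[Y]$.

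First I would identify a convenient base for the topology of $X$. A general basic open of $X$ has the form $\rho_{i_1}^{-1}(U_{i_1}) \cap \cdots \cap \rho_{i_n}^{-1}(U_{i_n})$ with each $U_{i_k}$ open in $X_{i_k}$. Since the index category is cofiltered, I can pick an object $j$ below all the $i_k$ with connecting morphisms $d_k \colon X_j \to X_{i_k}$ of the diagram. Then $\rho_{i_k} = d_k \circ \rho_j$ and
\[
\rho_{i_1}^{-1}(U_{i_1}) \cap \cdots \cap \rho_{i_n}^{-1}(U_{i_n}) \;=\; \rho_j^{-1}\bigl(d_1^{-1}(U_{i_1}) \cap \cdots \cap d_n^{-1}(U_{i_n})\bigr),
\]
which is of the form $\rho_j^{-1}(U)$ for some open $U \subseteq X_j$. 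So the sets $\rho_j^{-1}(U)$ (over all indices $j$ and open $U \subseteq X_j$) form a base for the topology on $X$.

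Next, let $V = \rho_j^{-1}(U)$ be a non-empty basic open in $X$. Choose any $x \in V$, so that $\rho_j(x) \in U$, whence $U \neq \emptyset$. Since $f_j \colon Y \to X_j$ is surjective, pick $y \in Y$ with $f_j(y) = \rho_j(x) \in U$. Because $f$ is the mediating map, $\rho_j \circ f = f_j$, and hence $\rho_j(f(y)) = f_j(y) \in U$; that is, $f(y) \in \rho_j^{-1}(U) = V$. Thus $f[Y] \cap V \neq \emptyset$, proving that $f[Y]$ meets every non-empty basic open of $X$, i.e.\ $f[Y]$ is dense in $X$.

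The only genuinely delicate step is the cofiltered reduction of a finite intersection of $\rho_{i_k}^{-1}(U_{i_k})$ to a single $\rho_j^{-1}(U)$; once that base for the limit topology is in hand, the surjectivity of each $f_j$ makes the density argument essentially automatic, with no assumption of Hausdorffness or compactness needed.
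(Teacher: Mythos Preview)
Your proof is correct. The paper itself does not prove this lemma; it simply cites \cite{Ribes2010} for the proof, so there is no in-paper argument to compare against, and your direct argument via the standard description of the cofiltered-limit topology and the reduction of basic opens to a single $\rho_j^{-1}(U)$ is exactly the expected one.
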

\begin{lemma}
  \label{prop:projection-is-surjective}
  Let $D$ be a cofiltered diagram in the category of compact Hausdorff spaces. 
  If all $D(i \xrightarrow{f} j)$
  are surjective, so is each projection $\rho_i\colon \Lim D \to Di$.
\end{lemma}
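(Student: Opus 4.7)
The plan is to fix an object $i$ of the index category $I$ and a point $x_i \in Di$, and produce $y \in \Lim D$ with $\rho_i(y) = x_i$. I would realise $\Lim D$ as a closed subspace of the compact Hausdorff product $P = \prod_{k \in I} Dk$ (Tychonoff), concretely as $\Lim D = \bigcap_f C_f$, where for each arrow $f \colon k \to l$ in $I$ the set $C_f = \{\, y \in P : D(f)(y_k) = y_l \,\}$ is closed. Writing $E_i = \{\, y \in P : y_i = x_i \,\}$, which is also closed, the task reduces to showing $E_i \cap \bigcap_f C_f \neq \emptyset$.

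By compactness of $P$ and the finite intersection property, it suffices to check that every \emph{finite} subfamily of the $C_f$'s meets $E_i$. So fix finitely many arrows $f_1, \ldots, f_n$ of $I$, say $f_s \colon k_s \to l_s$. At this step I would invoke cofilteredness of $I$: every finite subdiagram admits a cone, so there exist an object $m$ and morphisms $h_k \colon m \to k$, one for each object $k$ in the finite set $\{i, k_1, l_1, \ldots, k_n, l_n\}$, satisfying $h_{l_s} = f_s \circ h_{k_s}$ for every $s$.

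Now the surjectivity hypothesis becomes available. Since $D(h_i) \colon Dm \to Di$ is surjective, pick $y_m \in Dm$ with $D(h_i)(y_m) = x_i$ and set $y_k \defeq D(h_k)(y_m)$ for each $k$ in the finite family (choosing arbitrary values for the remaining components). The cone equation $h_{l_s} = f_s \circ h_{k_s}$ gives $D(f_s)(y_{k_s}) = y_{l_s}$, so $y \in C_{f_s}$ for all $s$, while $y_i = x_i$ by construction; hence $E_i \cap C_{f_1} \cap \cdots \cap C_{f_n}$ is non-empty, as required. The real content is conceptual rather than computational: one must extract from cofilteredness a cone over the finite subdiagram that \emph{respects} the chosen arrows $f_s$, so that lifting $x_i$ through the single surjection $D(h_i)$ automatically produces mutually compatible components. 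With that cone in hand, surjectivity of one connecting map does the work, and FIP compactness assembles the global compatible family.
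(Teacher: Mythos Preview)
Your argument is correct and is the standard compactness proof of this fact. The paper does not give its own proof but simply cites Chapter~1 of Ribes--Zalesskii, where essentially the same finite-intersection-property argument appears; so your approach matches the intended one.
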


\section{Detailed proofs}

This appendix contains all proofs and additional details we omitted due to space restrictions.\\

\begin{defn}
An object $A$ in a category $\A$ is called \emph{finitely copresentable} if 
the hom-functor $\A(\mathord{-},A): \A^{\op}\to\Set$ preserves filtered 
colimits. Equivalently, given a cofiltered limit cone $\pi_i: B\to B_i$ ($i\in 
I$) in $\A$, any morphism $f: B\to A$ factors essentially uniquely through 
some $\pi_i$. The category $\A$ is \emph{locally finitely copresentable} if 
(i) $\A$ is complete, (ii) $\A$ has only a set of finitely copresentable 
objects up to isomorphism, and (iii) every object of $A$ is a cofiltered limit 
of finitely copresentable objects.
\end{defn}

\begin{rem}\label{rem:lfcp}
For any small category $\A_0$ with finite limits the pro-completion $\A=\Pro{\A_0}$, i.e. the free completion of $\A_0$ under cofiltered limits, is locally finitely copresentable, and the full subcategory of finitely copresentable objects is equivalent to $\A_0$. See e.g. \cite[Theorem 6.23]{arv11}.
\end{rem} 

\noindent\emph{Details for Remark \ref{re:hatD-is-procompletion}}. By \cite[Remark VI.2.4]{Johnstone1982} the category $\hatD$ is the pro-completion of $\D_f$. (The argument given there is for varieties of single-sorted algebras, but also applies to the ordered and many-sorted case.) Hence $\hatD$ is locally finitely copresentable and the objects of $\D_f$ are precisely the finitely copresentable objects of $\hatD$, see  \Cref{rem:lfcp}. This implies (ii). For (i) see \cite[Proposition 1.22]{Adamek1994}.

\begin{lemma}\label{lem:hatdlimits}
$\hatD$ is closed under cofiltered limits in the category of topological 
$\D$-algebras. In particular, the forgetful functor $V: \hatD\to \D$ preserves 
cofiltered limits.
\end{lemma}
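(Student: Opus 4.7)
The plan is to verify the two claims in turn. Given a cofiltered diagram $F\colon I\to\hatD$, I would first form its limit $L$ in the category of topological $\D$-algebras: this exists and is computed as the usual limit in $\Top$ (a closed subspace of $\prod_i F(i)$) equipped with the pointwise operations obtained from the limits of the structure maps. Since each $F(i)$ is compact Hausdorff, so is $L$. The task is then to show $L$ is profinite, i.e.\ realisable as a cofiltered limit of finite discrete $\D$-algebras.

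For this, pick presentations $F(i)=\lim_{j\in J_i} G_{i,j}$ with $G_{i,j}\in\D_f$ discrete. I would then amalgamate the family $\{J_i\}_{i\in I}$ together with the cofiltered diagrams $G_{i,\bullet}$ into a single cofiltered diagram $H\colon K\to\D_f$ via a Grothendieck-type construction: objects of $K$ are pairs $(i,j)$ with $j\in J_i$, and a morphism $(i,j)\to(i',j')$ consists of a morphism $i\to i'$ in $I$ together with coherence data compatible with the transition maps. Cofilteredness of $K$ follows from cofilteredness of $I$ and of each $J_i$. Using the compatibility of iterated limits in $\Top$ (equivalently, that compact Hausdorff spaces are closed under limits in $\Top$), one obtains $L\cong \lim H$ as a topological $\D$-algebra, exhibiting $L$ as a cofiltered limit of finite discrete $\D$-algebras, hence as an object of $\hatD$.

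For the second claim, limits in the category of topological $\D$-algebras are constructed by taking the limit of underlying $\D$-algebras in $\D$ and endowing it with the initial topology induced by the limit projections. Hence the forgetful functor from topological $\D$-algebras to $\D$ preserves all limits, and restricting to $\hatD$ yields that $V\colon\hatD\to\D$ preserves cofiltered limits.

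The main obstacle is the bookkeeping in the amalgamation step: one must verify not only that $K$ is cofiltered but also that the limit of $H$ in $\Top$ really agrees with the limit $L$ computed in topological $\D$-algebras. This is standard (alternatively one can appeal to the abstract fact that $\hatD=\Pro\D_f$ is closed under cofiltered limits as the free completion under cofiltered limits, together with the observation that the embedding $\D_f\hookrightarrow$ topological $\D$-algebras factors through $\hatD$ in a limit-preserving way), but the routine verification of compatibility of the nested limits is where the technical work lies.
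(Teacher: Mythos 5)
Your treatment of the second claim is fine and coincides with the paper's: limits in the category $\Top(\D)$ of topological $\D$-algebras are formed on the level of $\D$, so once closure of $\hatD$ under cofiltered limits in $\Top(\D)$ is known, $V$ preserves them. The gap is in your main route to that closure, and it sits exactly in the steps you defer as ``bookkeeping''. The diagram $H$ is not well defined from arbitrary presentations $F(i)=\lim_{j\in J_i}G_{i,j}$: a morphism $i\to i'$ in $I$ induces no map between the chosen presentations, and the ``coherence data'' you postulate can only be produced by the factorisation property that every morphism from a cofiltered limit of finite algebras into a finite algebra factors, essentially uniquely, through a finite stage of the limit --- i.e.\ that finite algebras are finitely copresentable in $\hatD$ (\Cref{re:hatD-is-procompletion}(ii)); this is the point where compactness enters and it appears nowhere in your argument. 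Without it, your claim that ``cofilteredness of $K$ follows from cofilteredness of $I$ and of each $J_i$'' is not true: to dominate $(i,j)$ and $(i',j')$ you must factor the composites $F(i'')\to F(i)\to G_{i,j}$ and $F(i'')\to F(i')\to G_{i',j'}$ through some finite stage $G_{i'',j''}$, and to equalise parallel pairs in $K$ you need the essential uniqueness of such factorisations. Likewise $L\cong\lim H$ is not an instance of ``compatibility of iterated limits'': there is no bi-indexed diagram until those (non-canonical) choices are made, and the comparison map is an isomorphism only after a finality argument resting on the same copresentability property. A workable repair is to use the canonical comma-category presentations $(F(i)\downarrow\D_f)$, which are functorial in $i$, and then prove cofinality via the factorisation property; but that property is the mathematical content and must be established, not assumed.

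Your parenthetical fallback is closer to the paper but, as phrased, circular: saying that $\D_f\hookrightarrow\Top(\D)$ ``factors through $\hatD$ in a limit-preserving way'' presupposes that the inclusion $\hatD\hookrightarrow\Top(\D)$ preserves cofiltered limits, which is precisely the statement of the lemma. The paper runs the universal property in the other direction: since $\hatD=\Pro{\D_f}$ is the free completion under cofiltered limits, the inclusion $\D_f\hookrightarrow\Top(\D)$ extends essentially uniquely to a cofiltered-limit-preserving functor $I'\colon\hatD\to\Top(\D)$; evaluating $I'$ on the canonical diagram of all morphisms $D\to D'$ with finite codomain and using that $D$ is profinite identifies $I'$ with the inclusion, whence the closure statement is immediate. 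Either adopt that argument, or supply the finite-copresentability/compactness factorisation needed to make your amalgamation go through.
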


\begin{proof}
Let $\Top(\D)$ denote the category of topological $\D$-algebras. Since $\hat 
J: \D_f \monoto \hatD$ is the pro-completion of $\D_f$ the inclusion functor 
$I: \D_f\monoto \Top(\D)$ has an essentially unique extension $I': \hatD\to 
\Top(\D)$ preserving cofiltered limits. The functor $I'$ maps an object $D\in 
\hatD$ to the cofiltered limit $I'D$ of the diagram of all morphisms $f: D\to 
D'$ with finite codomain. But since $D$ is profinite it follows $I'D \cong D$, 
that is, $I'$ is just the inclusion.

That $V$ preserves cofiltered limits now follows from the fact that limits in $\Top(\D)$ are formed on the level of $\D$.
\end{proof}

\begin{rem}\label{rem:arrowcatclfp}
 Since $\hatD$ is locally finitely copresentable so is the arrow category 
 $\hatD^\to$. Its finitely copresentable objects are precisely the morphisms 
 in $\D_f$, i.e. morphisms with finite domain and codomain, see 
 \cite[Corollary 1.54 and Example 1.55]{Adamek1994}. Hence every morphism 
 $f\colon A \to B$ in~$\hatD$ is a cofiltered limit (taken in $\hatD^\to$) of 
 morphisms in $\D_f$.
\end{rem}

\begin{proof}[Lemma \ref{lem:hatD-factorisation}]
 Let $f\colon A \to B$ in~$\hatD$. Express $f$ as a cofiltered limit of 
 morphisms $f_i$ in $\D_f$ with the
  limit cone $(a_i, b_i)\colon f \to f_i$, see \Cref{rem:arrowcatclfp}. 
  Factorise $f_i=m_i\o e_i$ into a quotient followed by a subalgebra in $\D$.
  \[
    \vcenter{
      \xymatrix{
        A \ar[rr]^f  \ar[d]_{a_i} &  & B \ar[d]^{b_i} \\
        A_i \ar@{->>}[r]_{e_i} \ar `d[r]`[rr]_{f_i}[rr] & C_i
        \ar@{ >->}[r]_{m_i} & B_i.
      }
    }
  \]
  Due to the diagonal fill-in property, the finite objects $C_i$ form  a
  cofiltered diagram with $(e_i)$ and $(m_i)$ natural transformations. Let  $\left(c_i\colon C
    \to C_i\right)$ be its limit in $\hatD$, and, $e \defeq \Lim e_i$, and $m \defeq \Lim m_i$ in $\hatD^\to$,
  as shown in the diagram below:
  \[
    \vcenter{
      \xymatrix{
        A \ar`u[r]`r[rr]^{f}[rr] \ar[r]^{e} \ar[d]_{a_i} & C \ar[d]^{c_i}
        \ar[r]^{m} & B
        \ar[d]^{b_i} \\
        A_i \ar`d[r]`r[rr]_{f_i}[rr] \ar@{->>}[r]^{e_i} & C_i
        \ar@{ >->}[r]^{m_i} & B_i.
      }
    }
  \]
  The morphism $e$ is surjective by 
  \Cref{lem:surjections-between-cofiltered-diagrams}. To show that $m$ is 
  injective observe that $m(c) = m(c')$ implies
  \[ m_i\o c_i(c) = b_i\o m(c) = b_i\o m(c') = m_i\o c_i(c') \]
   thus $c=c'$ since $m_i$ is monic and the morphisms $c_i$ are jointly monic. 
   In the ordered case, to show that $m$ is order-reflecting use instead that 
   $m_i$ is order-reflecting and the morphisms $c_i$ are jointly 
   order-reflecting.
\end{proof}

\begin{proof}[\Cref{prop:profcomp}]
 Since $V$ preserves cofiltered limits by Lemma \ref{lem:hatdlimits}, there is 
 a 
 unique mediating morphism $\eta_D: D \to V\hat D$ in $\D$ with $V\hat h \o 
 \eta_D = h$ for all $h: D\to D'$ with $D'\in\D_f$. We need to show that for 
 any $g: D\to VE$ with $E\in\hatD$ there is a unique $g^@: \hat D\to E$ with 
 $Vg^@\o \eta_D = g$. 
 
 Express $E$ as a cofiltered limit $p_i\colon E\ra E_i$, $i\in 
    I$, with $E_i\in\D_f$. Then the morphisms $\widehat{Vp_i\o g}: \hat D \to 
    E_i$ form a cone, so there is a unique $g^@\colon \hat{D}\ra E$ 
    with $p_i\o g^@ = \widehat{Vp_i\o g}$ for all $i$. It follows that
   \[ Vp_i \o Vg^@ \o \eta_D = V(\widehat{Vp_i\o g})\o \eta_D = Vp_i\o g. \] 
    Since $V$ preserves cofiltered limits the morphisms $Vp_i$ are jointly 
    monic, so $Vg^@\o \eta_D = g$. Moreover, since one can restrict the limit 
    cone defining $\hat D$ to surjective morphisms $h: D\epito D'$ (they form 
    an 
    initial subdiagram of $D\downarrow \D_f$), the morphism 
    $\eta_D$ is dense by \Cref{lem:dense-lemma} (where $D$ is viewed as a 
    discrete 
    topological space). This implies that the morphism $g^@$ is uniquely 
    determined by the equation $Vg^@ \o \eta_D = g$, as desired.
\[
\xymatrix{
D \ar[dr]_g\ar[r]^{\eta_D} \ar@/^2em/[rr]^f & V\hat{D} \ar[d]_{Vg^@} 
\ar[r]^<<<<<{V\hat h} \ar[dr]^>>>>{V(\widehat{Vp_i\o g})} & VD' = D' \\
& VE \ar[r]_<<<<<{Vp_i} & VE_i = E_i
}
\]
We conclude that $V$ has a left adjoint whose action on objects is given by 
$D\mapsto \hat D$, and on morphisms $h: D\to E$ by $h\mapsto (\eta_E\o h)^@$. 
It remains to show that for $E$ finite we have $(\eta_E\o g)^@ = \hat g$. Indeed, in this case we have $\hat E = E$ and $\eta_E = \id$, and the limit cone $(p_i)$  can be chosen trivial (that is, $I=\{1\}$, $E_1=E$ and $p_1 = \id$). Then
\[ (g\eta_E)^@ = g^@ = p_1\o g^@ = \widehat{Vp_1\o g} = \hat g.\]
\end{proof}

\noindent\emph{Details for \Cref{re:factorisation_DT}.} We prove the claim that the factorisation system of $\D$ of lifts to $\D^\MT$. Letting 
  $h=m\o e$ be the canonical factorisation of $h$ in $\D$, diagonal fill-in 
  gives a unique $\gamma: TC\to C$ making the diagram below commute. One 
  easily verifies that $(C,\gamma)$ is a $\MT$-algebra, so $e: 
  (A,\alpha)\epito 
  (C,\gamma)$ and $m: (C,\gamma)\monoto (B,\beta)$ are $\MT$-homomorphisms.
  \[
    \xymatrix{
      TA \ar@{->>}[r]^{Te} \ar@<-.1ex> `u[]`[rr]^{Th}[rr]
      \ar[d]_{\alpha} & \ar[r]^{Tm} TC
      \ar@{-->}[d]^{\gamma} & TB \ar[d]^{\beta} \\
      A \ar@{->>}[r]_{e} \ar@<-.1ex> `d[]`[rr]_{h}[rr] & C \ar@{ >->}[r]_m & 
      B
    }
  \] 

\begin{notation}\label{not:limitcone}
Recall from Remark \ref{rem:hattconst} that for $X\in\hatD$ the object $\hatt 
X$ 
is the limit of the diagram
\[ (X \downarrow \hatJ U) \to \hatD,\quad (f: X\to \hatJ U(A,\alpha)) \mapsto 
A. \]
We denote the limit projections by $f^* : \hatt X \to A$.
\end{notation}

\noindent\emph{Details for Remark \ref{rem:hattconst}.} We show that the 
diagrams \eqref{eq:hat-eta} commute for all $h: (TD,\mu_D) \to (A,\alpha)$ 
with $(A,\alpha)\in \D^\MT_f$.  As for the left hand diagram, recall the 
definition of $\hateta = (\id_{\hatJ 
  U})^\dagger$
  in the limit formula for right Kan extensions: for each $f\colon X \to 
  \hat{J} U(A,
  \alpha)$, the collection of $f$'s forms a compatible cone for
  the limit defining $\hatt X$, and $\hateta_X$ is the unique mediating 
  morphism satisfying the diagram
  \[
    \xymatrix{
      X \ar[dr]_{f} \ar[r]^{\hateta_X} & \hatt X \ar[d]^{f^*} \\
       & \hat{A}.
    }
  \]
  For $X=\hat D$ and $f=\widehat{h\eta_D}$ we have $f^* = h^+$, which  proves 
  the claim. Similarly for the right hand diagram.

\begin{proof}[\Cref{prop:characterisation-hatT}]
  By \Cref{re:factorisation_DT}, 
  we can factorise every morphism $h\colon (TD, \mu_D) \to (A, \alpha)$ as a
  quotient followed by a subalgebra:
  \[
    \xymatrix{
      TTD \ar@{->>}[r]^{Te} \ar[d]_{\mu_D} & \ar[r]^{Tm} TA_0
      \ar@{-->}[d]^{\alpha_0} & TA \ar[d]^{\alpha} \\
      TD \ar@{->>}[r]_{e} & A_0 \ar@{ >->}[r]_m & A.
    }
  \]
  This shows the subdiagram of all quotients of $(TD, \mu_D)$ is initial in the
  diagram defining~$\hatt \hat D$. Hence their limits are the same.
\end{proof}


\begin{proof}
By \Cref{prop:characterisation-hatT}  we have the limit cone 
$e^+: \hatt \hat D \to A$ in $\hatD$. That all $\br{e}$'s are surjective 
follows by
  \Cref{prop:projection-is-surjective}.
  Further, the morphisms $\br{e}$ are $\hatT$-algebra homomorphisms and hence
  $e^+:(\hatt
  X, \hat\mu_X)\to (A, \alpha^+)$ is a cofiltered limit cone in 
  $\hatD^{\hatT}$ 
  by~\eqref{eq:hat-eta}, 
  ,\Cref{rem:alphaplus} and the fact that limits of $\hatT$-algebras are 
  formed on the level of $\hatD$. 
\end{proof}

\begin{proof}[\Cref{lem:hattcoflimsurj}]
(a) $\hatt$ preserves cofiltered limits: for the purpose of this proof it is 
convient to express the limit defining $\hatt X$ for $X\in\hatD$, see 
\Cref{not:limitcone}, by the end 
formula
\[ \hatt X = \int_{(A, \alpha)\in \D_f^\MT} \widehat{\D}(X, \hat{A}) 
\pitchfork \hat{A}.\]
  Note that the power functor $- \pitchfork B\colon \Set^\op \to \hatD$
  preserves limits, and for $A$ finite the hom-functor $\hatD(-, 
  \hat{A})\colon 
  \hatD^\op
  \to \Set$ preserves filtered colimits. Hence the composite $\hatD(-, \hat{A})
  \pitchfork B\colon \hatD \to \hatD$ preserves cofiltered limits. Expressing
  $\hatt$ as an end, it follows that $\hatt$ preserves
  cofiltered limits:
  \begin{align*}
    \hatt (\Lim X_i) 
     = & \int_{(A, \alpha)} \widehat{\D}(\Lim X_i, \hat{A})
    \pitchfork \hat{A}  \\
     = &\int_{(A, \alpha)} \Lim_i \left(\widehat{\D}(X_i, \hat{A}) \pitchfork
       \hat{A} \right)  \\
     = &\Lim_i \left(\int_{(A, \alpha)} \widehat{\D}(X_i, \hat{A}) \pitchfork
       \hat{A} \right) \\
     = &\Lim_i (\hatt X_i).
  \end{align*}

\noindent $\hatt$ preserves surjective morphisms: 
  let $f\colon X \to Y$
  be a surjective morphism of~$\hatD$. We need to show that $\hatt f$ is 
  surjective.
  \begin{enumerate}[(1)]
    \item Suppose first that $X$ and $Y$ are finite.  Then, $f = 
    \widehat{f_0}$ for 
    some
      $f_0\colon X_0 \to Y_0$ in~$\D$.  Since $T$ preserves 
      surjections by \Cref{asm:sec3}, the morphism $Tf_0 \colon TX_0 \to
      TY_0$ is surjective, so every quotient $e$ of the free
      $\MT$-algebra~$(TY_0, \mu_{Y_0})$ in~$\D^\MT$ yields a quotient 
      $\overline e 
      \defeq
      e \o Tf_0$ of~$(TX_0, \mu_{X_0})$. Due
      to~\Cref{prop:characterisation-hatT}, $\hatt f$ is the mediating morphism
      \[
        \xymatrix{
          \hatt X \ar[r]^{\hatt f} \ar@{->>}[rd]_{\br{\overline e}} &
          \hatt Y \ar@{->>}[d]^{\br{e}} \\
          & \hatJ U(A, \alpha)
        }
      \]
      for all quotients~$e$ of~$(TY_0, \mu_{Y_0})$. Each component 
$\br{\overline e}$ 
      is
      surjective by \Cref{prop:projection-is-surjective}, so the mediating
      morphism~$\hatt f$ is also surjective by
      \Cref{lem:surjections-between-cofiltered-diagrams}.
    \item Now let $X$ and $Y$ be arbitrary. By \Cref{rem:arrowcatclfp} the 
    morphism $f$ is the cofiltered limit in $\hatD^\to$ of all morphisms 
    $(x_i, y_i) \colon f \to f_i$ with $f_i\colon X_i \to Y_i$ in 
          $\D_f$. Let $f_i = e_i \o m_i$ be the canonical 
      factorisation of $f_i$ in $\hatD$.
      \[
        \xymatrix@-1em{
          X \ar@{->>}[rr]^{f} \ar[dd]_{x_i} & & Y \ar[dd]^{y_i} 
          \ar@{-->}[ld]_{h_i} \\
          & X_i' \ar@{ >->}[rd]_{m_i} \\
          X_i \ar@{->>}[ru]^{e_i} \ar[rr]_{f_i} & & Y_i .
        }
      \]
      There exists a unique morphism~$h_i \colon Y \to X_i'$ by the diagonal
      fill-in property. It follows that surjections $(e_i)$ form an initial 
      cofiltered
      subdiagram.  Since $\hatt $ preserves cofiltered limits by (a),
      $\hatt f$ is the limit of  $\hatt e_i \colon \hatt X_i\to \hatt Y_i$.
      However, each $\hatt e_i$ is surjective as proved in (1), so by
      \Cref{lem:surjections-between-cofiltered-diagrams} it follows that 
$\hatt f$ is surjective.\\

\noindent (c) We need to prove that finite $\hatT$-algebras are finitely copresentable in $\hatD^{\hatT}$. Consider the category~$\Alg(\hatt)$ of algebras for the
  \emph{functor}~$\hatt$. In \cite[Lemma
  3.2]{Adamek2004a} it is proved that for a cofinitary functor on a locally 
  finitely copresentable
  category, every algebra $(A, \alpha)$ with $A$ finitely copresentable is
  finitely copresentable in~$\Alg(\hatt)$. Now observe that $\hatD^{\hatT}$ is 
  a full subcategory of~$\Alg(\hatt)$ closed under
  limits, $\hatD$ is
    locally finitely copresentable, and finite objects in $\hatD$ are finitely 
    copresentable, see \Cref{re:hatD-is-procompletion}. Hence, finite 
    $\hatT$-algebras are finitely
  copresentable in $\hatD^\MT$.
\end{enumerate}
\end{proof}

\begin{rem}\label{re:forgetful-monad-morphism}
Every right adjoint preserves right Kan extensions, so $(V\hatt, 
V\epsilon)$
  is the right Kan extension of $V\hatJ U$ along~$\hatJ U$.  Moreover, there 
  is 
  a natural transformation $\gamma: TV\hat J U \to V\hat J U$ whose component 
  at $(A,\alpha)\in \D^\MT_f$ is given by $\alpha$ itself:
  \[
    \gamma_{(A, \alpha)} \defeq \alpha \colon TU(A, \alpha) \to U(A, \alpha).
  \]
 Therefore, by the universal property of $V\hatt$,
  there is a unique natural transformation $\phi\colon TV \to V\hatt $ such
  that~$\gamma = V \epsilon \o \varphi \hatJ U$. By the limit formula for 
  right Kan extensions and Remark 
 \ref{rem:hattconst} 
  the component 
  $\phi_{\hat X}$ for $X\in \D$ is the 
  unique mediating morphism making the following diagram commute for all $h: 
  (TX,\mu_X) \to (A,\alpha)$ with $(A,\alpha)\in\D^\MT_f$:
  \begin{equation}\label{eq:phix}
    \vcenter{
      \xymatrix{
        TV\hat X \ar[r]^{\phi_{\hat X}} \ar[d]_{TV(\widehat{h\eta_{X}})} &
        V\hatt \hat X \ar[d]^{V\br{h}} \\
        TA= TV\hat A \ar[r]_>>>>>>>>{\alpha} & **[r] V\hat A = A.
      }
    }
  \end{equation} 
  In particular, putting $h=\alpha$ for $(A,\alpha)\in\D^\MT_f$, we get the 
  commutative triangle
   \begin{equation}\label{eq:phia}
      \vcenter{
        \xymatrix{
          TV A \ar[r]^{\phi_{ A}} \ar[dr]_\alpha &
          V\hatt  A \ar[d]^{V\br{\alpha}} \\
           & A.
        }
      }
  \end{equation}
\end{rem}

\begin{proposition} \label{prop:forgetful-functor}
\begin{enumerate}[(a)] 
\item $\phi\colon TV \to V\hatt$ is a \emph{monad morphism~\cite{Street1972}}, 
  i.e.\
  the following diagrams commute:
    \[
      \vcenter{
        \xymatrix@R-2em{
          & TV \ar[dd]^{\phi} \\
          V \ar[ru]^{\eta V} \ar[rd]_{V\hat \eta} & \\
          & V\hatt
        }
      }
      \quad\text{and}\quad
      \vcenter{
        \xymatrix@-.5em{
          TTV \ar[d]_{\mu V} \ar[r]^{T\phi} & TV\hatt  \ar[r]^{\phi
            \hatt} & V\hatt \hatt  \ar[d]^{V\hat\mu} \\
          TV \ar[rr]_{\phi} & & V\hatt 
        }
      }
    \]
\item For finite $X$ the morphism $\phi_X\colon TVX \to V\hatt X$ is dense.
\end{enumerate}    
\end{proposition}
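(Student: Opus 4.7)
For part (a), my plan is to exploit the universal property of $V\hatt$ as a right Kan extension of $VK$ along $K=\hatJ U$ (\Cref{re:forgetful-monad-morphism}): any natural transformation $\gamma'\colon GK\to VK$ admits a unique $\gamma'^\dagger\colon G\to V\hatt$ with $V\epsilon\circ \gamma'^\dagger K = \gamma'$. For the unit law, I would show that both $\phi\circ\eta V$ and $V\hateta$ are the dagger of $\id_{VK}$, so they coincide. For $V\hateta$ this is immediate from $\hateta = (\id_K)^\dagger$. For $\phi\circ\eta V$, using $V\epsilon\circ\phi K = \gamma$ (the defining equation of $\phi$) and $\gamma_{(A,\alpha)} = \alpha$, the $\MT$-algebra unit law $\alpha\circ\eta_A = \id_A$ yields $V\epsilon\circ(\phi\circ\eta V)K = \gamma\circ \eta VK = \id_{VK}$. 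Uniqueness then gives the equality.

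For the multiplication axiom, I would apply the same strategy to the two natural transformations $V\hatmu\circ \phi\hatt\circ T\phi$ and $\phi\circ\mu V$ from $TTV$ to $V\hatt$. Composing with $K$ on the right and $V\epsilon$ on the left, the right-hand side reduces directly to $\gamma\circ \mu VK$. For the left-hand side, I would use the identity $V\epsilon\circ V\hatmu K = V(\epsilon\circ \hatmu K) = V(\epsilon\circ \hatt\epsilon)$ (from the definition of $\hatmu$), together with naturality of $\phi$ applied at $\epsilon\colon \hatt K\to K$, which gives $V\hatt\epsilon\circ \phi\hatt K = \phi K\circ TV\epsilon$; together these reduce the left-hand side to $\gamma\circ T\gamma$. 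Component-wise at $(A,\alpha)$, the two sides become $\alpha\circ\mu_A$ and $\alpha\circ T\alpha$, equal by the $\MT$-algebra associative law, and uniqueness closes the argument. The main obstacle here is bookkeeping: tracking the interplay between $\phi$, $\epsilon$, and the various whiskerings, with \eqref{eq:phix} and \eqref{eq:phia} as the key tools.

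For part (b), let $X$ be finite and set $D\defeq VX\in\D_f$, so $X = \hat D$. By \Cref{prop:characterisation-hatT}, $\hatt\hat D$ is the cofiltered limit in $\hatD$ of all surjective $\MT$-homomorphisms $e\colon (TD,\mu_D)\epito (A,\alpha)$, with projections $\br{e}$. Since $V$ preserves cofiltered limits by \Cref{lem:hatdlimits} and the forgetful functor from topological $\D$-algebras to $\Top$ creates limits, the underlying space of $V\hatt\hat D$ is the cofiltered limit in $\Top$ of the finite discrete spaces $A$ with projections $V\br{e}$. The family $(e\colon TVD\twoheadrightarrow A)_e$ forms a cone of surjections over the same diagram, and $\phi_D$ is its mediating map: indeed, applying \eqref{eq:phix} with $h = e$ and noting that $D$ is finite (so $\widehat{e\eta_D} = e\eta_D$), we obtain $V\br{e}\circ\phi_D = \alpha\circ T(e\eta_D) = e\circ \mu_D\circ T\eta_D = e$, using that $e$ is an $\MT$-homomorphism and the monad unit law. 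Then \Cref{lem:dense-lemma} immediately yields that $\phi_D$ is dense. The main care-point here is ensuring the density lemma's topological hypothesis is met, which is guaranteed by the limit-preservation remarks above.
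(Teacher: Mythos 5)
Your proposal is correct and follows essentially the same route as the paper: part (a) is exactly the paper's pasting-diagram argument written out equationally (reduce both sides, via the universal property of the right Kan extension $(V\hatt,V\epsilon)$ and the defining equation $V\epsilon\circ\phi K=\gamma$, to the Eilenberg--Moore unit and associativity laws $\alpha\circ\eta_A=\id$ and $\alpha\circ T\alpha=\alpha\circ\mu_A$), and part (b) is the paper's computation $V\br{e}\circ\phi_X=e$ for surjective $\MT$-homomorphisms $e$ combined with \Cref{prop:characterisation-hatT} and \Cref{lem:dense-lemma}.
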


\begin{proof}
(a)  The following pasting diagrams use the universality of $\epsilon$ and the 
monad
  laws.
  \begin{itemize}
    \item The preservation of unit $\phi \o \eta V = V\hat\eta$:
      \begin{align*}
        &
        \vcenter{
          \xymatrix{
            \D^\MT_f \ar[d] \ar[r] \drtwocell<\omit>{\epsilon} &
            \hatD \dtwocell{\hat\eta} \\
            \D_f \ar[r] & \hatD \ar[r]_V & \D
          }
        }
        \quad=\quad
        \vcenter{
          \xymatrix{
            \D^\MT_f \ar[d] \ar[rr] \drrtwocell<\omit>{\id} & & \D
            \ar[d]^{I} \\ \D_f \ar[rr] &  & \D
          }
        }
        \\
        =
        &
        \vcenter{
          \xymatrix{
            \D^\MT_f \ar[d] \ar[rr] \drrtwocell<\omit>{\gamma} & & \D
            \dtwocell{\eta} \\ \D_f \ar[rr] &  & \D
          }
        }
        \quad=\quad
        \vcenter{
          \xymatrix{
            \D^\MT_f \ar[d] \ar[r] \drtwocell<\omit>{\epsilon} &
            \hatD \ar[d] \ar[r]^V \drtwocell<\omit>{\phi}
            & \D \dtwocell{\eta} \\
            \D_f \ar[r] & \hatD \ar[r]_V & \D
          }
        }
      \end{align*}
    \item The preservation of multiplication $\phi \o \mu_V = V\hat\mu \o
      \phi\hatt \o T\phi$: 
      \begin{align*}
        &
        \vcenter{
          \xymatrix{
            \D^\MT_f \ar[d]_{U_f} \ar[r] \drtwocell<\omit>{\epsilon} &
            \hatD \ar[d]^{\widehat{T}} \ar[r]^V \drtwocell<\omit>{\phi}
            & \D \dtwocell{\mu} \\
            \D_f \ar[r] & \hatD \ar[r]_V & \D
          }
        }
         = 
        \vcenter{
          \xymatrix{
            \D^\MT_f \ar[d] \ar[rr] \drrtwocell<\omit>{\gamma} &
            & \D \dtwocell{\mu} \\
            \D_f \ar[rr] &  & \D
          }
        }
        \\
        = &
        \vcenter{
          \xymatrix@C+2em@R-.5em{
            \drtwocell<\omit>{\gamma} & \D \ar[d]^T \\
            \D_f^\MT \ar[r] \drtwocell<\omit>{\gamma}
            \ar@/^1pc/[ru]\ar@/_1pc/[rd] & \D \ar[d]^T \\
            & \D
          }
        }
         =  
        \vcenter{
          \xymatrix@C+2em@R-.5em{
            \drtwocell<\omit>{\epsilon} & \D \dtwocell{\phi} \\
            \D_f^\MT \ar[r] \drtwocell<\omit>{\epsilon}
            \ar@/^1pc/[ru]\ar@/_1pc/[rd] & \D \dtwocell{\phi} \\
            & \D
          }
        }
         =
        \vcenter{
          \xymatrix@C+2em@R-.5em{
             \ddrtwocell<\omit>{\epsilon}    & \D \dtwocell{\phi}
             \ddlowertwocell{\mu}  \\
            \D_f^\MT 
            \ar@/^1pc/[ru]\ar@/_1pc/[rd] & \D \dtwocell{\phi} \\
            & \D
          }
        }
      \end{align*}
  \end{itemize}

\noindent (b) If $X$ is finite we have for all $h: (TX,\mu_X) \to (A,\alpha)$ 
with $(A,\alpha)\in \D^\MT_f$:
 \[ \alpha\o TV(\widehat{h\eta_X}) = \alpha \o Th\o T\eta_X = h\o \mu_X \o 
 T\eta_X = h  \]
 using that $h$ is a $\MT$-homomorphism and the unit law of the monad $\MT$.
 In particular, if $h$ is surjective, so is the cone $\alpha\o 
 TV(\widehat{h\eta_X})$ in \eqref{eq:phiv}. By 
 \Cref{prop:characterisation-hatT} and Lemma \ref{lem:dense-lemma} 
 this implies that $\phi_X$ is dense.
\end{proof}

\begin{proof}[\Cref{prop:finite-algebras}]
(a) The maps $(A, \alpha) \mapsto ({A}, \br{\alpha})$ and $h
  \mapsto \widehat{h}$ define a functor 
  \begin{equation} 
    \label{eq:comparison-functor}
    K^\MT \colon \D_f^\MT \to \hatD_f^{\hatT}.
  \end{equation}
To see this we only need to prove that for every $\MT$-homomorphism $h: 
(A,\alpha)\to (B,\beta)$ is also a $\hatT$-homomorphism $(A,\alpha^+)\to 
(B,\beta^+)$. Indeed, in the diagram below the right hand square commutes 
when precomposed with $\hat\eta_A$, so it commutes by the universality of 
$\hat\eta_A$.
\[
\xymatrix{
A \ar[r]^{\hat\eta_A} \ar[d]_h \ar@/^2em/[rr]^\id & \hatt A \ar[d]^{\hatt h} 
\ar[r]^{\alpha^+} & A \ar[d]^h\\
B \ar[r]_{\hat\eta_B} \ar@/_2em/[rr]_\id & \hatt B \ar[r]_{\beta^+} & B
}
\]


\noindent(b) Conversely, the monad morphism $\phi: TV\to V\hatt$, see 
\Cref{re:forgetful-monad-morphism} and
\Cref{prop:forgetful-functor}, induces a
functor 
\[V^\MT \colon
    \hatD^{\hatT}_f \to \D^\MT_f\] 
    mapping $(A, \alpha)$ to~$(VA, V\alpha \o \phi_A)$ and $h\colon (A, 
    \alpha) \to (B, \beta)$ to $Vh$.

%

\noindent (c) It remains to prove that $K^\MT$ and $V^\MT$ are mutually 
inverse 
functors. Clearly this holds on morphisms, since both functors are identity on 
morphisms. As for objects, for every
  finite $\MT$-algebra $(A, \alpha)$ we have
  \begin{align*}
    V^\MT K^\MT (A, \alpha)
    & = V^\MT (\hat{A}, \br{\alpha}) 
    && \{\,\text{by definition}\,\} \\
    & = (V\hat{A}, V\alpha^+ \o \phi_A) 
    && \{\, \text{by definition}\,\} \\
    & = (A, \alpha).
    && \{\, \text{by \eqref{eq:phia}}\,\}
  \end{align*}

  Conversely, let $(A, \alpha)$ be a finite $\hatT$-algebra
  and $(A, \alpha') \defeq K^\MT V^\MT (A, \alpha)$. Applying~$V^\MT$ on both
  sides, we have $V^\MT(A, \alpha') = V^\MT K^\MT V^\MT (A, \alpha) = V^\MT(A,
  \alpha)$, using that $V^\MT K^\MT = \Id$ as proved above. That is, 
  $V\alpha'$ and 
  $V\alpha$ agree on the image 
  of~$\varphi_A$,
  i.e.\ the diagram
  \[
    \xymatrix{
      TVA \ar[r]^{\varphi_A} & V\hatt A \ar@<-1ex>[r]_{V\alpha}
      \ar@<1ex>[r]^{V\alpha'} & VA
    }
  \]
  commutes. By \Cref{prop:forgetful-functor}(b) the morphism $\phi_A$ is 
  dense, which implies $V\alpha' = V\alpha$ since $A$ is a 
  Hausdorff space. Since the forgetful 
  functor $V$ is faithful, we conclude $\alpha = \alpha'$. 
\end{proof}

\begin{proof}[\Cref{lem:eqtovar}]
Let us first consider the unordered case. Since intersections of 
pseudovarieties are pseudovarieties, we may assume that 
our class is presented by a single equation $u=v$ over a finite set $X$.
\begin{enumerate}[(a)]
\item Closure under finite products: clearly the trivial one-element 
$\MT$-algebra satisfies all profinite equations. 
Let $(A_0,\alpha_0)$ and $(A_1,\alpha_1)$ be finite $\MT$-algebras satisfying 
$u=v$ and let $p_i\colon (A_0\times A_1,\alpha)\to (A_i,\alpha_i)$ be their 
product.  For 
any  $\MT$-homomorphism $h\colon T\psi_X\to A_0\times A_1$ put $h_i = p_i \o 
h$. Then
\[ p_i \o \br{h}  (u) = \br{h_i}  (u) =  \br{h_i} (v) = 
p_i \o\br{h}  (v) \] so $\br{h}  (u) = \br{h}  (v) $  since 
the projections $p_i$ are jointly monic. We conclude that 
$(A_0\times A_1,\alpha)$ satisfies $u=v$.
\item Closure under subalgebras: let $(A,\alpha)$ be a finite $\MT$-algebra 
satisfying $u=v$ and $m\colon (B,\beta)\monoto (A,\alpha)$ be a subalgebra. 
Then 
for any $\MT$-homomorphism $h\colon T\psi_X \to B$ we have
\[ m\o \br{h} (u) = \br{(m\o h)}  (u) =  \br{(m\o h)}  (v) =  m\o 
\br{h} (v) \]
so $\br{h}(u) = \br{h} (v)$ since $m$ is monic. Hence $(B,\beta)$ 
satisfies 
$u=v$.
\item Closure under quotients: let $(A,\alpha)$  be a finite $\MT$-algebra  
satisfying $u=v$ and $e\colon 
(A,\alpha)\epito (B,\beta)$ be a quotient. For any $h_0: \psi_X\to B$ choose a 
morphism $h_0': \psi_X\to A$ with $h_0 = e\o h_0'$, cf. \Cref{rem:homtheorem}. 
The corresponding $\MT$-homomorphisms $h: T\psi_X\to B$ and $h': T\psi_X\to A$ 
satisfy $h = e\o h'$, so
\[ \br{h}  (u) = e\o \br{(h')} (u) = e\o \br{(h')}(v) = 
\br{h}  (v), \]
proving that $(B,\beta)$ satisfies $u=v$.
\end{enumerate}
The ordered case in analogous: replace profinite equations by inequations, in 
(a) use that the projections $p_i$ are jointly order-reflecting, and in (b) 
use that $m$ is order-reflecting.
\end{proof}

\begin{proof}[\Cref{lem:keylemma}]
Consider first the unordered case.
For any $\MT$-homomorphism $h\colon TX\to A$ with $(A,\alpha)\in \V$ we have 
the 
commutative triangle \eqref{eq:phiv}.
Since the projections $\brv{h}$ are jointly monic, it follows that 
$\phi^\V_\hatX u = \phi^\V_\hatX v$ iff $\br{h} u = \br{h} v$ 
for all 
$h$, i.e.\ iff every $(A,\alpha)\in \V$ satisfies $u=v$. For the ordered case 
replace equations by inequations and use that the 
projections $\brv{h}$ are jointly order-reflecting.
\end{proof}

\begin{proof}[Theorem \ref{cor:reitermanquasi}]
3$\Ra$1 requires a routine verification analogous to the proof of
\Cref{lem:eqtovar}, and 1$\Ra$2 is \Cref{thm:reiterman}. For 2$\Ra$3 
consider first the unordered case. We may assume that $\V$ is presented by a 
single profinite equation $u=v$ with $u,v$ elements of some $X\in\D_f$. 
Express $X$ as a quotient $q\colon \Phi_Y\epito X$ for some finite set $Y$. 
Let 
$\{\,(u_i,v_i) : i \in I\,\}$ be the kernel of $\hat q\colon  \hat{\Phi_Y} 
\epito 
X$ (consisting of all pairs $(u_i,v_i)\in \hat{\Phi_Y}\times\hat{\Phi_Y}$ with 
$\hat q (u_i) = \hat q (v_i)$), and choose $u',v'\in \hat{\Phi_Y}$ with $\hat 
q(u') = u$ and  $\hat 
q(v') = v$.  We claim that a finite object $A\in\D_f$ satisfies the 
profinite equation $u=v$ iff it satisfies the profinite implication 
\begin{equation}
\bigwedge_{i\in I} u_i=v_i ~\Ra~ u'=v',\tag{\ref{eq:imp}}
\end{equation}
which proves that $\V$ is presented by that implication. 

For the ``if'' direction suppose that $A$ satisfies \eqref{eq:imp}. For any morphism $h\colon X\ra A$ in $\D$ we have the commutative triangle below:
\[
\xymatrix{
\hat{\Phi_Y} \ar[r]^>>>>>>{\hat q} \ar[d]_{\widehat{hq}} & \hatX=X 
\ar[dl]^{\hat{h}=h}\\
A & 
}
\]
The morphism $\widehat{hq}$ merges $u_i,v_i$ for all $i$ since $\hat q$ does, 
so by \eqref{eq:imp} it also
merges 
$u',v'$. We conclude
\[ \hat{h}(u) = \hat{h}\hat q(u') = \widehat{hq}(u') = \widehat{hq}(v')  =
\hat{h}\hat q(v') = \hat{h}(v),\]
so $A$ satisfies $u=v$.

For the ``only if'' direction, suppose $A$ satisfies $u=v$, and let $h\colon
\Phi_Y\ra A$ be a morphism with $\hat{h}(u_i)=\hat{h}(v_i)$ for all $i$. Since
$(u_i,v_i)$ are precisely the pairs merged by $\hat q$, the homomorphism
theorem (see \Cref{rem:homtheorem}) yields a morphism $h'\colon X \to A$ 
with $h'\o \hat q = \hat{h}$. Since 
$A$ satisfies $u=v$, it follows that $h'(u)=h'(v)$ and hence 
 \[ \hat{h}(u') = h'\o  \hat q (u') = h' (u) = h'(v) = h'\o  \hat q (v') = 
 \hat{h}(v').\]
Hence $A$ satisfies the implication \eqref{eq:imp}.

The ordered case is analogous to the above argument, replacing 
equations by inequations and using the homomorphism theorem for ordered 
algebras in the ``only if'' direction.
\end{proof}

\end{document}